\title{A Tight Double-Exponential Lower Bound for High-Multiplicity Bin Packing}
\author{Klaus Jansen}{Kiel University}{kj@informatik.uni-kiel.de}{https://orcid.org/0000-0001-8358-6796}{}%TODO mandatory, please use full name; only 1 author per \author macro; first two parameters are mandatory, other parameters can be empty. Please provide at least the name of the affiliation and the country. The full address is optional. Use additional curly braces to indicate the correct name splitting when the last name consists of multiple name parts.
\author{Felix Ohnesorge}{Kiel University}{foh@informatik.uni-kiel.de}{https://orcid.org/0009-0003-8023-3380}{}
\author{Lis Pirotton}{Kiel University}{lpi@informatik.uni-kiel.de}{https://orcid.org/0009-0001-4984-3696}{}
\authorrunning{K. Jansen, F. Ohnesorge and L. Pirotton}
\keywords{Bin Packing, Lower Bound, Computational Complexity, ETH} %TODO mandatory; please add comma-separated list of keywords
\definecolor{babyblueeyes}{rgb}{0.63, 0.79, 0.95}
\acrodef{3-SAT}[\textsc{3-SAT}]{}
\acrodef{Bin Packing}[\textsc{Bin Packing}]{}
\acrodef{ILP}[\textsc{ILP}]{Integer Linear Program}
\acrodef{ETH}[\textsc{ETH}]{Exponential Time Hypothesis}
\newcommand{\vect}[1]{\boldsymbol{#1}}
\newcommand{\ZZ}{\mathbb{Z}}
\newcommand{\satbase}{\gamma}
\newcommand{\lisM}{M}
\newcommand{\linsymb}{\faStar[regular]}
\newcommand{\nlinsymb}{\faStar}
\newcommand{\ntag}{\refstepcounter{equation}\tag{\nlinsymb~\theequation}}
\newcommand{\ineqtag}{\refstepcounter{equation}\tag{\linsymb~\theequation}}
\newcounter{Ceq}
\renewcommand{\theCeq}{C\arabic{Ceq}}
    \let\c@equation\c@Ceq
    \renewcommand{\theequation}{\theCeq}%
\crefname{claim}{Claim}{Claims}
\begin{document}

\maketitle

%TODO mandatory: add short abstract of the document
\begin{abstract}
    Consider a high-multiplicity \textsc{Bin Packing} instance $I$ with $d$ distinct item types.
    In 2014, Goemans and Rothvoss gave an algorithm with runtime ${{|I|}^2}^{O(d)}$ for this problem~[SODA'14], where $|I|$ denotes the encoding length of the instance $I$.
    Although Jansen and Klein~[SODA'17] later developed an algorithm that improves upon this runtime in a special case, it has remained a major open problem by Goemans and Rothvoss~[J.ACM'20] whether the doubly exponential dependency on $d$ is necessary.

    We solve this open problem by showing that unless the Exponential Time Hypothesis (\textsc{ETH}) fails, there is no algorithm solving the high-multiplicity \textsc{Bin Packing} problem in time ${{|I|}^2}^{o(d)}$.
    To prove this, we introduce a novel reduction from \textsc{3-SAT}.
    The core of our construction is efficiently encoding all information from a \textsc{3-SAT} instance with $n$ variables into an \textsc{ILP} with $O(\log n)$ variables and constraints.

    This result confirms that the Goemans and Rothvoss algorithm is essentially best-possible for \textsc{Bin Packing} parameterized by the number $d$ of item sizes in the context of XP time algorithms.
\end{abstract}

\section{Introduction}
The \acs{Bin Packing} problem is a classic optimization problem with many applications.
\begin{definition}[\acs{Bin Packing}]
    Given are \(d \in \ZZ_{>0}\) item types with sizes \(\vect{s} = (s_1, \ldots, s_d) \in (0, B]^d\) and multiplicities \(\vect{a} = (a_1, \ldots, a_d) \in \ZZ_{>0}^d\).
    The \acs{Bin Packing} problem asks to find the minimum number of bins of size \(B \in \mathbb{Z}_{>0}\) to pack all items.
\end{definition}
The \acs{Bin Packing} problem is also known as the (1-dimensional)
\textsc{Cutting Stock} problem, and its study goes back to the classical paper by Gilmore and Gomory~\cite{gilmore1961linear}.
While the problem is strongly NP-hard in general, a major research direction has focused on parameterized algorithms for the \emph{high-multiplicity} setting, where \(d\) is assumed to be a small parameter.
A breakthrough result in this area came
in 2014 from Goemans and Rothvoss~\cite{DBLP:journals/jacm/GoemansR20}, who proved that it is solvable in polynomial time for constant $d$. This answered an open question posed by McCormick, Smallwood and Spieksma \cite{DBLP:journals/mor/McCormickSS01},
as well as by Eisenbrand and Shmonin \cite{DBLP:journals/orl/EisenbrandS06}.

To obtain their result, Goemans and Rothvoss~\cite{DBLP:journals/jacm/GoemansR20} study the more general \textsc{Cone and Polytope Intersection} problem, defined as follows:
Given two polytopes \(\mathcal{P}, \mathcal{Q} \subseteq \mathbb{R}^d\): Is there a point in \(\mathcal{Q}\) that can be expressed as a non-negative integer combination of integer points in \(\mathcal{P}\)?
They gave an algorithm for this feasibility problem with time complexity \({{|\mathcal{P}|}^2}^{O(d)} \cdot |\mathcal{Q}|^{O(1)}\), where \(|\mathcal{R}|\) denotes the encoding length of a polytope \(\mathcal{R} \in \{\mathcal{P}, \mathcal{Q}\}\).
Additionally, they show how to construct a \textsc{Cone and Polytope Intersection} instance from any \textsc{Bin Packing} instance.
In this transformation, \(\mathcal{P} = \{ {\vect{x} \choose 1} \in \mathbb{R}^{d+1}_{\ge 0} | \vect{s}^\intercal \vect{x} \le B\} \) (the knapsack polytope)  contains all possible \emph{configurations} \(\vect x\) (i.e., multiplicity vectors of items that fit into a single bin), and \(\mathcal{Q} = \{\vect{a}\} \times [0,k]\) is constructed to encode the target item vector \(\vect{a}\) and the number of bins \(k\).
Using binary search over $k$, this yields an algorithm for \acs{Bin Packing} with runtime \({{|I|}^2}^{O(d)} \), where \(|I|\) denotes the encoding length of the instance. For $d=O(1)$ the encoding length is \(|I| = O(\log(\Delta))\), where $\Delta = \max\{\|\vect{a}\|_\infty, \|\vect{s}\|_\infty, B\}$.

This result was later improved by Jansen and Klein~\cite{DBLP:journals/mor/JansenK20}.
They gave an algorithm with time complexity \({{|V|}^2}^{O(d)} \cdot \log(\Delta)^{O(1)}\), where \(V\) is the set of vertices of the corresponding integer knapsack polytope.
This result improves upon the algorithm in~\cite{DBLP:journals/jacm/GoemansR20} if the number of vertices \(|V|\) is small.
Since $|V| \ge d + 1$, this gives a Fixed-Parameter Tractable (\textsc{FPT}) algorithm
parameterized by the number of vertices in the integer knapsack polytope.
On the other hand, the number of vertices can be bounded only by $|V| = O(\log \Delta)^{2^{O(d)}}$ \cite{DBLP:journals/combinatorica/CookHKM92,hartmann88}.
Therefore, the algorithm by Jansen and Klein has a worst case running time $O(\log \Delta)^{2^{O(d)}}$ which is identical to the running time of the algorithm by Goemans and Rothvoss.

Goemans and Rothvoss \cite{DBLP:journals/jacm/GoemansR20}  wrote in their journal paper:
\begin{quote}
    "A natural open problem that arises from this work is whether the double exponential running time is necessary."
\end{quote}
In addition to this question, they asked whether \acs{Bin Packing} can be solved in \textsc{FPT} time \(f(d) \cdot O(\log(\Delta))^{O(1)}\), where \(f(d)\) is an arbitrary function.

% \subsection{Related Work}
Recent work has highlighted the inherent complexity related to this parameterization.
Kowalik, Lassota, Majewski, Pilipczuk, and Sokolowski~\cite{DBLP:conf/sosa/KowalikLMPS24} studied the \textsc{Point in Cone} problem, a special case of the \textsc{Cone and Polytope Intersection} problem where the second polytope $\mathcal{Q}$ consists of just one point $q$.
They showed that a double exponential dependency on $d$ is unavoidable under the \ac{ETH} when $\mathcal{P}$ has an exponential number of inequalities. This results in an \acs{ETH}-tight lower bound.

%Kowalik, Lassota, Majewski, Pilipczuk, and Sokolowski~\cite{DBLP:conf/sosa/KowalikLMPS24} proved an \acs{ETH}-tight lower bound for the  \textsc{Point in Cone} problem (where the second polytope $\mathcal{Q}$ in the \textsc{Cone and Polytope Intersection} instance consists of just one point $q$), showing that under the \ac{ETH} a double exponential dependency on \(d\) is unavoidable for a general polytope $\mathcal{P}$ with an exponential number of inequalities.

\begin{definition}[\acs{ETH}, \cite{DBLP:conf/focs/ImpagliazzoPZ98}]
    The \ac{ETH} conjectures that \acs{3-SAT} cannot be solved in subexponential time. Specifically, there exists a constant \( \delta > 0 \) such that  no algorithm can solve \acs{3-SAT} with \(n\) variables in time \( 2^{\delta n} \).
\end{definition}
As proven in~\cite{DBLP:journals/jcss/ImpagliazzoPZ01}, this implies that there is no algorithm for \acs{3-SAT} with running time $2^{o(n+m)}$, where \(m\) denotes the number of clauses in the formula; see also
Theorem 14.4 in~\cite{DBLP:books/sp/CyganFKLMPPS15}.

Furthermore, the structure of solutions to the \acs{Bin Packing} problem with $d$ item sizes is known to be complex. Eisenbrand and Shmonin \cite{DBLP:journals/orl/EisenbrandS06} proved via an elegant combinatorial argument that there is always an optimal solution for \acs{Bin Packing} with a support (the number of distinct configurations needed) bounded
by $2^d$.
Recently, Jansen, Pirotton, and Tutas~\cite{DBLP:conf/esa/JansenPT25} showed that the support of any optimal solution in a \acs{Bin Packing} instance can be exponential in \(d\).
This structural hardness provides further evidence that a double exponential runtime may be optimal.

\subsection{Our Contribution}
We answer the open problem above by confirming that the double exponential runtime in Goemans and Rothvoss~\cite{DBLP:journals/jacm/GoemansR20} is necessary for \acs{Bin Packing}, assuming the \acs{ETH}.
Thus, their algorithm is asymptotically optimal regarding \textsc{XP} time algorithms parameterized by \(d\).
\begin{restatable}{theorem}{mainThm}
    \label{thm:main-result}
    There is no algorithm solving high-multiplicity \acs{Bin Packing} with \(d\) distinct item sizes in time \({{|I|}^2}^{o(d)}\), unless the \acs{ETH} fails.
\end{restatable}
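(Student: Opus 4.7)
The plan is a reduction from 3-SAT showing that a hypothetical algorithm for Bin Packing with running time $|I|^{2^{o(d)}}$ would yield a $2^{o(n)}$ algorithm for 3-SAT, contradicting the ETH. Concretely, given a 3-SAT instance $\varphi$ on $n$ variables (and, by the Sparsification Lemma, $m = O(n)$ clauses), I would build in polynomial time a Bin Packing instance $I(\varphi)$ with exactly $d = O(\log n)$ distinct item sizes, all numerical parameters bounded by $2^{\mathrm{poly}(n)}$, and such that $\varphi$ is satisfiable iff $I(\varphi)$ can be packed into at most some prescribed number $k(\varphi)$ of bins. Then $|I(\varphi)| = O(d \log \Delta) = \mathrm{poly}(n)$, and a $|I|^{2^{o(d)}}$ algorithm applied to $I(\varphi)$ runs in time $\mathrm{poly}(n)^{2^{o(\log n)}} = \mathrm{poly}(n)^{n^{o(1)}} = 2^{o(n)}$, contradicting the ETH.

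Following the blueprint announced in the abstract, the reduction factors through an intermediate ILP. In the \emph{first stage} I would encode $\varphi$ into an ILP $\mathcal{I}(\varphi)$ with only $k = O(\log n)$ integer variables, each ranging over $[0, 2^{\mathrm{poly}(n)}]$. The $n$ truth values would be packed into the binary expansions of these integers---together they carry $\Theta(n \log n)$ bits, leaving ample room both for the assignment itself and for auxiliary witness data used to verify clauses. The constraints of $\mathcal{I}(\varphi)$ then enforce both the internal consistency of this compressed encoding and the satisfaction of each of the $m$ clauses, using polynomially many inequalities so that its encoding length remains polynomial in $n$. In the \emph{second stage} I would realize $\mathcal{I}(\varphi)$ as a Bin Packing instance via the Cone and Polytope Intersection framework of Goemans and Rothvoss: the $d = O(\log n)$ item sizes correspond to the coordinates of $\mathcal{I}(\varphi)$, the integer points of the knapsack polytope $\mathcal{P}$ play the role of admissible configurations, and the right-hand side of $\mathcal{I}(\varphi)$ is encoded into the multiplicity vector $a$ and the bin bound $k(\varphi)$. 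The reason this step is feasible at all is that $\mathcal{P}$ admits an enormously rich set of integer configurations (bounded in the worst case by the $(\log \Delta)^{2^{O(d)}}$ figure already cited in the excerpt), providing precisely the expressive slack needed to encode an ILP over so few variables into a single knapsack-type polytope.

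I expect the first stage to be the central technical obstacle. A 3-SAT clause refers to three specific bit positions of the compressed variables, but bit extraction is not a linear operation---there is no primitive in an ILP for ``the $j$-th bit of $x_i$ equals $b$.'' Overcoming this, as suggested by the acknowledgments concerning constraint linearization and extended formulations (Kaibel and Weltge), should rely on designing inequalities that simulate bit access through a small number of additional integer variables with carefully chosen large coefficients, while keeping the overall variable count at $O(\log n)$. A secondary and more routine difficulty is to ensure that the second stage does not inflate the number of distinct item sizes beyond $O(\log n)$; this should follow once $\mathcal{I}(\varphi)$ is arranged so that a single capacity-type inequality together with $d$ coordinate directions describes its integer hull, matching the shape $\mathcal{P} = \{x \ge 0 : s^T x \le B\}$ of the polytope underlying Bin Packing.
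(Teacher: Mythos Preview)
Your high-level outline and the final runtime arithmetic are correct, but there is a genuine gap in how the ILP becomes a Bin Packing instance. You propose to pack the entire truth assignment into the bits of $O(\log n)$ integer variables, so that $\mathcal{I}(\varphi)$ is feasible iff $\varphi$ is satisfiable, and then ``realize'' this via the knapsack polytope. But a Bin Packing instance asks whether a \emph{fixed, known} multiplicity vector $a$ lies in the integer cone of $\{x \ge 0 : s^T x \le B\} \cap \mathbb{Z}^d$; it does not ask whether that polytope contains some integer point satisfying extra constraints. Under your scheme the only distinguished integer point is the unknown satisfying assignment $x^\star$, and you cannot set $a$ in terms of $x^\star$ without already knowing it. Aggregating to a single equation $s^T x = B$ does not help either: the knapsack polytope $\{x \ge 0 : s^T x \le B\}$ then contains many integer points (e.g.\ $x=0$) regardless of $\varphi$, and no fixed target separates satisfiable from unsatisfiable formulas. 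A secondary issue is that converting your ``polynomially many inequalities'' to equalities before aggregation introduces $\mathrm{poly}(n)$ slack variables, destroying $d = O(\log n)$.

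The paper inverts the roles. Its ILP does \emph{not} encode the assignment: it has exactly $5n$ solutions---five per variable index $i \in [n]_0$---independent of whether $\varphi$ is satisfiable. The $\log n$ binary variables name an index $i$, and $O(\log n)$ further constraints perform a binary-search extraction, from one large integer $Z$ encoding the whole formula, of the three clause indices touching $v_i$; this is what keeps both the variable and the constraint count logarithmic. After aggregation to $s^T x = B$, these $5n$ vectors are precisely the tight bin configurations. The truth assignment is then encoded in \emph{which} $2n$ configurations sum to a fixed, precomputable target $a$: the instance packs into $2n$ bins iff $\varphi$ is satisfiable. (Four coordinates of $a$ cannot be determined in advance, so the paper actually enumerates a polynomial family $\mathcal{I}^{\texttt{BP}}(\hat\chi)$ over $\hat\chi \in \{0,\dots,2n\}^4$ and checks them all.)
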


We note that this does not answer whether \acs{Bin Packing} is \textsc{FPT} with respect to \(d\).
However, our result shows that any algorithm with runtime \(f(d) \cdot {|I|}^{O(1)}\) must have \(f(d) = {2^2}^{\Omega(d)}\).

To achieve this result, we reduce \acs{3-SAT} with $n$ variables to \acs{Bin Packing} using an \ac{ILP} formulation.
While introducing new techniques that might also be interesting to achieve other lower bounds, we make use of known results and adapt them for our purpose.
We now briefly describe the key components of our reduction.

First, we demonstrate that we may assume each variable in our \acs{3-SAT} instance occurs exactly twice positively and once negatively.
This technique is an extension of~\cite{DBLP:journals/dam/Tovey84} where only the number of occurrences of a variable is restricted to three, regardless of literal polarity.
Next, we encode the instance in a single integer $Z$ with bounded size, similar to~\cite{DBLP:journals/mp/JansenKL23,DBLP:journals/siamdm/JansenLL16,DBLP:journals/mp/KaibelW15a,DBLP:conf/stoc/MandersA76,DBLP:books/daglib/0090562}. In contrast to the prior results, our encoding enables us to efficiently extract information from this number. More precisely, using just $O(\log n)$ \ac{ILP} variables and constraints, we are able to extract the clause numbers in which a variable appears. This idea stems from a discussion with Weltge. In their work~\cite{DBLP:journals/mp/KaibelW15a}, Kaibel and Weltge study lower bounds on sizes of \acp{ILP} without additional variables.

Altogether, we construct an \ac{ILP} that only allows a certain set of solutions. These solutions are carefully constructed as they will correspond to specific \acs{Bin Packing} configurations that, with a certain target vector, can model a feasible assignment to \acs{3-SAT}. Some of the constraints are already given in~\cite{DBLP:conf/esa/JansenPT25}. However, their \ac{ILP} is not able to model \acs{3-SAT}. To achieve this property, we have to ensure that a single \acs{3-SAT} variable cannot be set both negatively and positively at the same time.
While other reductions~\cite{DBLP:journals/mp/JansenKL23,DBLP:conf/sosa/KowalikLMPS24,DBLP:books/daglib/0090562} also use \ac{ILP} encodings, we manage to achieve a more compact encoding where the number of variables and equality constraints is only \emph{logarithmic} in \(n\) and the \ac{ILP} allows \(O(n) = O(2^d)\) specific solutions. For instance, the number of inequality constraints in the reduction from \textsc{Subset Sum with Multiplicities} to \textsc{Point in Cone} by~\cite{DBLP:conf/sosa/KowalikLMPS24} is linear in \(n\) (and exponential in \(d\)).

To transform our \ac{ILP} into a \acs{Bin Packing} instance, we use the aggregation technique in~\cite{DBLP:conf/esa/JansenPT25}.
Then, we determine the satisfiability of \acs{3-SAT} by solving a family of \(O(n^4)\) \textsc{Bin Packing} instances, denoted as \(\mathcal{I}^{\texttt{BP}}(\hat{\vect \chi})\).
Here \(\vect{\hat{\chi}}\) is a vector containing additional, in polynomial time computable, information of a valid \acs{3-SAT} solution; for example, the component $\hat{\chi}_3$ represents how many variables are set to \texttt{false}.

The compact encoding of \acs{3-SAT} and the construction of feasible solutions allow us to translate the \(2^{o(n)}\) lower bound for \acs{3-SAT} into the desired \({{|I|}^2}^{o(d)}\) lower bound for \acs{Bin Packing}.
It is worth noting that we require the sizes of the \acs{Bin Packing} instance to be encoded in binary due to large item sizes.
We firmly believe that this reduction technique, particularly the flexible encoding via an \ac{ILP}, is of independent interest.
It demonstrates a powerful pattern for establishing lower bounds that can likely be adapted to prove similar results for other problems, such as for high multiplicity \acp{ILP} with few constraints, $m$-dimensional knapsack, multiple knapsack, scheduling problems, and high multiplicity block structured $n$-fold and $2$-stage \acp{ILP}.
Other examples of a double exponential lower bounds under the \acs{ETH} are given in~\cite{DBLP:journals/siamcomp/CyganPP16,DBLP:journals/talg/FominGLSZ19,
    HunkenschroderKKLL25,
    DBLP:journals/mp/JansenKL23,DBLP:journals/toct/KnopPW20,
    DBLP:conf/sosa/KowalikLMPS24,DBLP:journals/jacm/KunnemannMSSW25,DBLP:conf/icalp/MarxM16}.

\section{Preliminaries}
Before presenting our main result, we establish the notation, definitions, and concepts used throughout the paper.
For $n\in \mathbb{Z}_{\ge1}$, we define $[n] \coloneqq \{1, 2, \ldots, n\}$ and $[n]_0 \coloneqq \{0,1, \ldots, n-1\}$.
For a vector \(\vect{x} \in \mathbb{Z}^n\), we denote its components by \(x_1, x_2, \ldots, x_n\).
We also use a convenient notation for vectors.
For example, if \(\vect{x} = (x_1, \ldots, x_n) \in \ZZ^n\) and \(a,b,c \in \ZZ\), then \(\vect{y} = (\vect{x}, a, b, c)\) denotes the vector \((x_1, \ldots, x_n, a, b, c)\).
As our reduction is from \acs{3-SAT}, we formally define this problem here:
\begin{definition}[\acs{3-SAT}]
    Given $n \in \mathbb{Z}_{\ge0}$ boolean variables $v_i, i \in [n]_0$ and a boolean formula $\varphi = C_0 \land C_1 \land \cdots \land C_{m-1}, m \in \mathbb{Z}_{\ge0}$, where each clause $C_j$ consists of at most three literals, e.g., $C_j = (\ell_{j1} \lor \ell_{j2} \lor \ell_{j3})$. A literal $\ell_{jk}$ is either a variable $v_i$ or its negation $\lnot v_i$ for some $i \in [n]_0$. The \acs{3-SAT} problem asks whether there exists an assignment $\phi : \{v_0, \ldots, v_{n-1}\} \mapsto \{\texttt{true},\texttt{false}\}$ that satisfies the formula $\varphi$.
\end{definition}

The concept of a \emph{well-structured} \acs{3-SAT} instance is central to many reductions in complexity theory, especially when aiming for tight lower bounds.
Tovey~\cite{DBLP:journals/dam/Tovey84} introduced a transformation that converts any \acs{3-SAT} instance into an equivalent one with at most \(3\) occurrences per variable.
We use a slight extension of this transformation to obtain instances where each variable appears exactly twice positively and once negatively.
Similar transformations have also been used in other reductions~\cite{DBLP:journals/dam/BermanKS07,DBLP:journals/tcs/JansenM95}.

\begin{restatable}[\faCut, Well-Structured \acs{3-SAT}; Extension of \cite{DBLP:journals/dam/Tovey84}]{lemma}{structuredSat}
    \label{def:structured-3sat}
    Given any instance of \acs{3-SAT} with \(n\) variables and \(m = O(n)\) clauses, there exists an equivalent instance with \(n' = O(n)\) variables and \(m'=O(n)\) clauses, where each variable \(v_i\) has exactly two positive appearances in the clauses and exactly one negative appearance.
\end{restatable}
\begin{proof}
    We provide the proof in \Cref{sec:proof-structuredSat} for completeness.
\end{proof}

In this work, we construct a non-trivial \ac{ILP}.
Whenever a constraint is indexed by a solid star (e.g., \ref{eq:lis-constraint}), it indicates that the constraint is nonlinear.
Similarly, a non-solid star (\linsymb) denotes a constraint that is linear but not yet in its equality form.
We state these purely for readability.
By applying standard techniques, such as adding slack variables or using the conversions captured in the next lemma, we transform them into linear equality constraints.
The full set of final equality constraints is listed in \Cref{sec:equalityConst}, where each unstarred constraint shares the exact same number as its starred counterpart in the main text (e.g., the final form of \ref{eq:lis-constraint} is \ref{eq:lis-constraint-linear}).

Linearization of quadratic terms has been extensively studied in the context of $0-1$ quadratic programming (BQP) and quadratic integer programming (QIP) with the original work going back to~\cite{DBLP:journals/mp/McCormick76}.
Quadratic terms $x \cdot y$ with Boolean and integer variables $x,y$ can be replaced using additional variables and/or inequalities; see also \cite{fortet1960applications,DBLP:journals/ior/GloverW74,DBLP:journals/mp/McCormick76}. For example,
if $x,y$ are both Boolean variables, then $z= x \cdot y$ can be replaced by
\begin{align*}
    z \le x,\ z \le y,\  z \ge x+y-1, \text{ and }  z \ge 0.
\end{align*}
If $x \in \{0,1\}$ and $y \in [L,U]$ and $y$ integral, then $z = x \cdot y$ can be replaced by
\begin{align*}
    z \le y,\ z \le U \cdot x,\ z \ge L \cdot x,\text{ and } z \ge y + (x-1) \cdot U.
\end{align*}
We show how to linearize such terms without the introduction of new variables under certain conditions.
The fact that we can linearize these constraints without introducing additional variables enables us to reason using equivalent nonlinear constraints.
%This improves both the readability and the conciseness of our proofs.

\begin{restatable}[\faCut]{lemma}{constraintLinearization}
    \label{lem:linearization}
    The nonlinear equation \(y = \sum_{j = 1}^k{x_j \cdot \chi_j}\) involving integer variables \(x_j\) with known bounds \(0 \leq x_j \leq U\) and binary variables \(\chi_j \in \{0, 1\}\) with \(\sum_{j = 1}^k \chi_j \leq 1\) can be equivalently expressed using \(O(k)\) linear inequalities and no additional variables.
\end{restatable}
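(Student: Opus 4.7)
The plan is to exploit the at-most-one condition $\sum_{j=1}^k \chi_j \le 1$ aggressively. Because of it, the nonlinear sum $\sum_{j=1}^k x_j \cdot \chi_j$ has at most one nonzero summand, so $y$ must equal either $0$ (when every $\chi_j = 0$) or $x_{j^*}$ (when a unique $\chi_{j^*} = 1$). I will encode both cases through one upper bound that forces $y = 0$ when no $\chi_j$ is active, together with standard big-$M$ activation inequalities for each pair $(x_j, \chi_j)$ that collapse to $y = x_{j^*}$ in the active case.

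Concretely, I propose to replace the equation $y = \sum_{j=1}^k x_j \cdot \chi_j$ by the following $2 + 2k = O(k)$ inequalities over only the existing variables:
\[
0 \le y \le U \sum_{j=1}^k \chi_j, \qquad x_j - U(1-\chi_j) \le y \le x_j + U(1-\chi_j) \quad \text{for each } j \in [k].
\]
After writing this down, I would verify equivalence by a short case analysis on whether some $\chi_{j^*}$ equals $1$ or not. If all $\chi_j = 0$, the bound $y \le U \sum_j \chi_j = 0$ combined with $y \ge 0$ pins $y$ to $0$, while the two-sided bounds reduce to $x_j - U \le y \le x_j + U$ and hold trivially since $x_j \in [0,U]$. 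If exactly one $\chi_{j^*} = 1$, the pair for $j = j^*$ tightens to $x_{j^*} \le y \le x_{j^*}$ and thus forces $y = x_{j^*}$; this value lies in $[0, U] = [0, U \sum_j \chi_j]$, and for every $j \ne j^*$ the bounds $x_j - U \le y \le x_j + U$ hold because both $y$ and $x_j$ lie in $[0,U]$. The converse (that every $y$ determined by the original equation satisfies all inequalities) is immediate from the same analysis.

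The main, and essentially only, obstacle I anticipate is the \emph{no additional variable} requirement. The textbook McCormick linearization of a single product $x_j \cdot \chi_j$ introduces a fresh auxiliary $z_j = x_j \chi_j$, and summing these naively would cost $\Theta(k)$ new variables, which would be fatal for the main reduction where the ambient ILP must have only $O(\log n)$ variables. The at-most-one hypothesis lets me side-step this: instead of linearizing each product in isolation and then summing, I treat $y$ itself as the unique ``active slot'' and force it, through the big-$M$ pairs above, to agree with the unique active $x_{j^*}$. No auxiliary variable is ever introduced, which is precisely what the surrounding construction needs.
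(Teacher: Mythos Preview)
Your proposal is correct and essentially identical to the paper's own proof: the paper uses precisely the same $2k+2$ inequalities $y \ge 0$, $y \le U\sum_j \chi_j$, and $-U(1-\chi_j) \le y - x_j \le U(1-\chi_j)$ for each $j$, and verifies equivalence by the same case split on whether some $\chi_{j^*}=1$ or all $\chi_j=0$. Your discussion of why no auxiliary variables are needed (bypassing the per-term McCormick linearization via the at-most-one hypothesis) also matches the paper's motivation.
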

\begin{proof}
    We provide the proof in \Cref{sec:proof-lemLinearization}.
\end{proof}

Another key concept at the core of our reduction is the \acs{ILP} aggregation.
For this, we have to transform each linear inequality constraint into one equality constraint through the introduction of slack variables.
While the general concept of an \acs{ILP} aggregation is not new, recently Jansen, Pirotton, and Tutas~\cite{DBLP:conf/esa/JansenPT25} presented a new technique allowing them to integrate upper bounds on variables into the aggregation.
We summarize this result:

\begin{lemma}[\cite{DBLP:conf/esa/JansenPT25}]
    \label{lem:aggregation}
    Let \(d, k = O(\log n)\). Consider an \ac{ILP} $A\vect{x}=\vect{b}$, with \(\vect{x} \in \ZZ_{\ge0}^{d}\), \(\vect{x} \le \vect{u}\) with $A = (a_{ij})_{i\in [k],j\in[d]} \in \ZZ^{k\times d}$, $\vect{u} \in \ZZ_{\ge0}^{d}$ and $\vect{b}\in \ZZ^{k}$ and let $\Delta \coloneqq \|A\|_\infty$ be the largest absolute value in $A$.
    The vector $\vect{x}$ is a feasible integer solution to the ILP $A\vect{x}=\vect{b}$, if and only if there exists a unique $\vect{y} \in \ZZ_{\ge0}^{d+1}$ such that $(\vect{x},\vect{y})$ is a feasible integer solution to
    \begin{equation}\label{eq:aggregated}
        \begin{array}{l}
            \sum_{i=1}^k \big(\lisM^{i-1} \sum_{j=1}^d (a_{ij} x_j)\big)
            + \sum_{j=1}^d \big(\lisM^{k+j-1} (x_j + y_j)\big)
            + \lisM^{k+d}\big(\sum_{j=1}^d (x_j+y_j)+y_{d+1}\big) \\
            = \sum_{i=1}^k (\lisM^{i-1} b_i)
            + \sum_{j=1}^d \big(\lisM^{k+j-1} u_j\big)
            + \lisM^{k+d} U,
        \end{array}
    \end{equation}
    where $U \coloneqq \sum_{j=1}^d u_j$ and $\lisM \coloneqq \Delta U + \max(\|\vect{b}\|_ \infty, \|\vect{u}\|_\infty) + \Delta + 2$.
\end{lemma}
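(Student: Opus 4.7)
The plan is to prove existence by exhibiting the natural slack-variable choice, and then prove uniqueness via a base-$\lisM$ digit extraction that exploits the largeness of $\lisM$ relative to $U$ and $\|u\|_\infty$.

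For \emph{existence}, given a feasible $x$, I would set $y_j \coloneqq u_j - x_j$ for $j \in [d]$, which is non-negative because $x \le u$, and $y_{d+1} \coloneqq 0$. Then $x_j + y_j = u_j$ for all $j$ and $\sum_j (x_j + y_j) + y_{d+1} = U$, so substituting into \eqref{eq:aggregated} each of the three groups of LHS terms matches the corresponding group on the RHS (using $Ax = b$ for the first group).

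For \emph{uniqueness}, fix a feasible $x$ and suppose $(x,y)$ satisfies \eqref{eq:aggregated}. Because $Ax=b$, the first group of terms on both sides cancels, so after dividing the remaining identity by $\lisM^k$ we are left with
\[
\sum_{j=1}^d \lisM^{j-1}(x_j + y_j) + \lisM^d \Bigl(\sum_{j=1}^d (x_j + y_j) + y_{d+1}\Bigr) = \sum_{j=1}^d \lisM^{j-1} u_j + \lisM^d U.
\]
First I would bound $S \coloneqq \sum_j (x_j + y_j) + y_{d+1}$. Since every summand on the LHS is non-negative, the LHS is at least $\lisM^d S$, while the RHS is at most $\lisM^d U + \lisM^{d-1} U$. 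Hence $S \le U + U/\lisM < U + 1$, so as a non-negative integer $S \le U$; in particular $x_j + y_j \le U < \lisM$ for every $j$. Combined with $u_j \le \|u\|_\infty < \lisM$, the digit extraction then proceeds cleanly: reducing the displayed identity modulo $\lisM$ forces $x_1 + y_1 \equiv u_1 \pmod{\lisM}$, and since both sides lie in $[0,\lisM)$ we get $x_1 + y_1 = u_1$. Subtracting this known term, dividing by $\lisM$, and iterating yields $x_j + y_j = u_j$ for every $j \in [d]$; substituting back then forces $y_{d+1} = U - \sum_j u_j = 0$, so the entire vector $y$ is uniquely determined.

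The main subtlety will be checking that the choice $\lisM = \Delta U + \max(\|b\|_\infty, \|u\|_\infty) + \Delta + 2$ is strictly larger than every quantity that can appear as a ``digit'' in this base-$\lisM$ expansion: namely $U$, $\|u\|_\infty$, and (should one wish to extend the argument in the natural converse direction, recovering $Ax=b$ from a solution of the aggregated equation) the magnitude $\Delta U + \|b\|_\infty$ of the possibly-signed coefficients $\sum_j a_{ij} x_j - b_i$. The additive constants in $\lisM$ are chosen precisely to guarantee all of these strict inequalities simultaneously, so that the digit extraction is unambiguous at every level and no carries can occur.
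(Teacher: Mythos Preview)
Your proposal is correct and follows essentially the same approach as the paper's proof sketch: introduce slack variables $y_j = u_j - x_j$ and $y_{d+1}=0$ for existence, and for uniqueness exploit that $\lisM$ is large enough to preclude carries so that the aggregated equation decomposes digit-by-digit in base $\lisM$. Your uniqueness argument is in fact more explicit than the paper's sketch (which simply asserts ``each constraint can be seen as a single base-$\lisM$ integer'' and defers the full equivalence to~\cite{DBLP:conf/esa/JansenPT25}); in particular, your preliminary bound $S\le U$ obtained from $\lisM^d S \le \lisM^d U + \lisM^{d-1}U$ is exactly the step that justifies the digit extraction, and this is only implicit in the paper.
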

\begin{proof}[Proof Sketch]
    Given an ILP of the form $A\vect{x}=\vect{b}, \vect{x} \in \ZZ_{\ge0}^d, \vect{x} \le \vect{u}$, we first replace the external upper bounds $\vect{x} \le \vect{u}$ by $d$ constraints $x_j + y_j = u_j, \forall j \in [d]$, while introducing non-negative integer slack variables $y_j \in \mathbb{Z}_{\ge0}$.
    Additionally, we add a constraint that upper bounds the upper bounds, i.e.\ $\sum_{j=1}^d u_j +y_{d+1} = U$, with $U \coloneqq \sum_{j=1}^d u_j$ and $y_{d+1} \in \mathbb{Z}_{\ge0}$.
    Next, we define the large base number $\lisM \coloneqq \Delta U + \max(\|\vect{b}\|_\infty, \|\vect{u}\|_\infty) + \Delta + 2$ that prevents carries when the constraints are aggregated.
    Now, we multiply each constraint by a power of $\lisM$, i.e., we multiply the first constraint by 1, the second one by $\lisM$, the third by $M^2$ and so on.
    Finally, we sum up all weighted equations to \Cref{eq:aggregated}.

    Since $\lisM$ is sufficiently large, each constraint can be seen as a single base-$\lisM$ integer. Therefore, a feasible solution $\vect x$ to the original \ac{ILP} is also a feasible solution to the aggregated one (while adding the unique slack variables) and the two systems are equivalent.
\end{proof}

For the full proof and the equivalence of both
\acp{ILP} (i.e., $\vect{x}$ is a feasible integral solution to $A\vect{x} = \vect{b}$, $0\le \vect{x} \le \vect{u}$, if and only if $(\vect{x},\vect{y})$ is a feasible integer solution with $\vect{x},\vect{y} \ge 0$ to \Cref{eq:aggregated}), we refer to Section 3 in \cite{DBLP:conf/esa/JansenPT25}.
Note that this \acs{ILP} aggregation is different to the technique in~\cite{DBLP:journals/jcss/JansenKMS13} where they reduce \textsc{c-Unary Bin Packing}, i.e., \textsc{Unary Bin Packing} with $c$ dimensions to \textsc{Unary Bin Packing} with one dimension. Here the authors prove that \textsc{Unary Bin Packing} is W[1]-hard, parameterized by the number of bins by first reducing \textsc{Subgraph Isomorphism} to \textsc{10-Unary Bin Packing} and then using a reduction to \textsc{Unary Bin Packing}.
This approach relies on sophisticated methods based on \emph{k-non-averaging} sets and \emph{sumfree} sets.
Our reduction heavily relies on the extension of the \ac{ILP} constructed in~\cite{DBLP:conf/esa/JansenPT25}, thus we capture the main properties of their construction in the following lemma:
\begin{lemma}[\faCut, \cite{DBLP:conf/esa/JansenPT25}]
    \label{lem:ilp-jpt}
    For any \(\gamma > \ZZ_{\geq 2}\), there exists an \acs{ILP} formulation \(A\vect{x} = \vect{b}\), with exactly \(n\) unique solutions that can be computed in polynomial time and where the first \(\log(n)+1\) coordinates of each solution form the following set:
    \[
        X = \left\{\left(x_1^{\texttt{bin}}, x_2^{\texttt{bin}}, \ldots, x_{\log n}^{\texttt{bin}}, \gamma^{\sum_{\ell=1}^{\log n}{2^{\log(n)-\ell} \cdot x_{\ell}^{\texttt{bin}}}}\right) \Big| x_\ell^{\texttt{bin}} \in \{0, 1\}\right\},
    \]
    with \(\vect x, \vect b \in \ZZ_{\geq 0}^{O(\log n)}\), and \(\|A\|_\infty \leq {\gamma^2}^{\log n}\).
\end{lemma}
\begin{proof}[Proof Sketch]
    The above set of solutions can be represented by the following set of nonlinear constraints with \(\tilde{r}_0 = 1\):
    \begin{constraints}
        \ntag\label{eq:lis-constraint}
        \tilde{r}_{\ell} & = \tilde{r}_{\ell-1} \left(1 + (\gamma^{2^{\log(n) - \ell}}-1)x^{\texttt{bin}}_\ell\right) & \forall \ell \in [\log n]
    \end{constraints}
    For the full list of linear constraints with the desired properties and the proof of correctness, we refer to \Cref{sec:equalityConst} (denoted as \eqref{eq:lis-constraint-linear}) and the original work~\cite{DBLP:conf/esa/JansenPT25}, respectively.
\end{proof}

\section{Reduction from \acs{3-SAT} to \acs{Bin Packing}}
We aim to prove via reduction from \acs{3-SAT} to \acs{Bin Packing} that there is no algorithm with a runtime of \({|I|}^{2^{o(d)}}\) for \acs{Bin Packing}, unless the \ac{ETH} fails.

Consider an arbitrary \acs{3-SAT} instance with \(n\) variables and \(m\) clauses.
We apply a series of simplifying transformations.
First, by the Sparsification Lemma~\cite{DBLP:conf/focs/ImpagliazzoPZ98} we can assume that \(m = O(n)\).
Next, we apply \Cref{def:structured-3sat}, such that each variable now appears positively in exactly two clauses and negatively in exactly one clause.
This transformation runs in polynomial time and does not significantly increase the number of variables and clauses, so we still have \(m = O(n)\).
Finally, to simplify the notation, we assume w.l.o.g.\ that the number of variables \(n\) is a power of two.
If \(n\) is not \(2^k\) for some integer \(k\), we repeatedly add a new variable \(v_{n}\) and the trivially true clause \((v_{n} \lor v_{n} \lor \lnot v_{n})\) until the number of variables is a power of two.
Now denote this well-structured \acs{3-SAT} instance as \(\mathcal{I}^{\texttt{SAT}}\).

In preparation for the reduction, we encode the \acs{3-SAT} instance \(\mathcal{I}^{\texttt{SAT}}\) in a single large integer \(Z\).
Similar techniques have been used in other reductions~\cite{DBLP:journals/mp/JansenKL23,DBLP:journals/siamdm/JansenLL16,DBLP:journals/mp/KaibelW15a,DBLP:conf/stoc/MandersA76,DBLP:books/daglib/0090562}; however, our encoding is structurally simpler than previous approaches.
Furthermore, we efficiently extract specific information from \(Z\); to our knowledge, such a method has not been previously utilized.

The general idea is to construct a large (base \(\satbase^m\)) integer with three dedicated "digits" for each variable \(v_i\) that encode the clauses in which \(v_i\) appears positively and negatively.
To that end, let \(j, k \in [m]_0\) be the clauses where \(v_i\) appears positively and let \(\ell \in [m]_0\) be the clause where \(v_i\) appears negatively.
We define \(C_{pos1}^{(i)} \coloneqq \satbase^{j}\), \(C_{pos2}^{(i)} \coloneqq \satbase^{k}\), and \(C_{neg\vphantom{p2}}^{(i)} \coloneqq \satbase^{\ell}\), with $\satbase\in\mathbb{Z}_{>3}$.
Later, we will set \(\gamma = 4n+1\); however, for the following lemmas this is not required.

\begin{lemma}
    \label{lem:sat-encoding}
    A well-structured \acs{3-SAT} instance with \(n\) variables and \(m\) clauses can be represented as an integer \(Z\) of size at most \(\satbase^{3nm}\) with \(\satbase \in \mathbb{Z}_{>3}\), such that
    \begin{align}\label{eq:Z}
        Z = \sum_{i=0}^{n-1}\left( (\satbase^m)^{3i} \cdot C_{pos1}^{(i)} + (\satbase^m)^{3i+1} \cdot C_{pos2}^{(i)} + (\satbase^m)^{3i+2} \cdot C_{neg\vphantom{p2}}^{(i)}\right).
    \end{align}
    The integer $Z$ can be computed in polynomial time.
\end{lemma}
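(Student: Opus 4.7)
The plan is to interpret the right-hand side of \eqref{eq:Z} as the base-$\satbase^m$ representation of $Z$: for each variable $v_i$, three consecutive base-$\satbase^m$ digits (at positions $3i,\ 3i+1,\ 3i+2$) store $C_{p1}^{(i)},\ C_{p2}^{(i)},\ C_{n\vphantom{p2}}^{(i)}$ respectively. Each of these factors is a single power $\satbase^c$ with $c \in [m]_0$, hence strictly less than $\satbase^m$, so it occupies one legal base-$\satbase^m$ digit and the $3n$ summands contribute to $3n$ distinct positions in the $\satbase$-adic expansion. From this observation both the well-definedness of $Z$ and the size bound will follow without further work.

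For the size bound I would simply observe that the largest $\satbase$-exponent appearing in any summand is achieved by the term with $i = n-1$ and factor $C_{n\vphantom{p2}}^{(n-1)}$, yielding at most $m(3(n-1)+2) + (m-1) = 3mn - 1$. Since digits do not overlap, $Z < \satbase^{3mn}$, matching the claimed bound.

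For the time bound I would maintain a running power of $\satbase$, starting from $\satbase^0 = 1$ and multiplying by $\satbase$ up to $3mn$ times; whenever the current exponent equals $m(3i+e)+c$ for one of the $3n$ pairs $(i,e)$ identified in a single $O(nm)$-time scan of the formula, I add the current power into a running total for $Z$. Each multiplication by $\satbase$ acts on a number of at most $O(nm\log\satbase)$ bits and costs $O(nm\log^2\satbase)$ with schoolbook arithmetic; doing this $O(nm)$ times yields $O(n^2m^2\log^2\satbase)$, which dominates the $3n$ additions into the running total.

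The only mild subtlety I anticipate is pinning down the encoding unambiguously: distinguishing $C_{p1}^{(i)}$ from $C_{p2}^{(i)}$ requires an arbitrary but fixed ordering of the two positive occurrences of $v_i$, which the well-structuredness guarantees exist. Once this choice is made, correctness reduces to the non-overlap of digit positions already established, and I do not expect any obstacle beyond being careful with the bit-complexity constants.
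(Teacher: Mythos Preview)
Your proposal is correct and follows essentially the same approach as the paper: both interpret $Z$ as a $3n$-digit base-$\satbase^m$ number with each digit strictly below $\satbase^m$ to obtain the size bound, and both compute $Z$ by incrementally building the required powers and accumulating, arriving at the same $O(n^2m^2\log^2\satbase)$ bound via schoolbook arithmetic. The only cosmetic difference is that the paper multiplies by $\satbase^m$ in a Horner-like pass with $O(n)$ steps of growing operand size, whereas you multiply by $\satbase$ one at a time over $O(nm)$ steps; the two analyses yield the same asymptotics.
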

\begin{proof}
    First, we verify that \(Z \leq \satbase^{3nm}\).
    The expression for \(Z\) can be viewed as a number represented in base \(\satbase^m\) with \(3n\) digits.
    We obtain the following upper bound for \(Z\)
    \begin{align*}
        Z \leq \sum_{i=0}^{3n-1} (\satbase^m-1)\cdot(\satbase^m)^i = (\satbase^m-1) \sum_{i=0}^{3n-1} (\satbase^m)^i = (\satbase^m-1) \frac{(\satbase^m)^{3n}-1}{\satbase^m-1} < \satbase^{3nm}.
    \end{align*}
    From the definition of a well-structured \acs{3-SAT} instance, we know that each variable \(v_i, i \in [n]_0\) appears in exactly three clauses: Twice positively (in the clauses $j,k$) and once negatively (in clause $\ell$).
    Note that these indices can be found in polynomial time by iterating through all clauses.
    Defining \(C_{pos1}^{(i)} := \satbase^{j}, C_{pos2}^{(i)} := \satbase^{k}, C_{neg\vphantom{p2}}^{(i)} := \satbase^{\ell}\) can be done by scanning over all clauses for all \(i \in [n]_0\).

    Next, we compute \(Z\) from these terms.
    The expression for \(Z\) can be computed efficiently using \(O(n)\) multiplications and \(O(n)\) additions.
    Let \(k\) be the bit-length of \(\satbase^m\) and the \(C^{(i)}\) terms.
    The computation involves a sequence of \(O(n)\) multiplications where the intermediate sum's bit-length grows linearly.
    The \(j\)-th multiplication for \(j \in \{1, \ldots, O(n)\}\) multiplies an intermediate sum of \(O(j \cdot k)\) bits by \(\satbase^m\).
    Using naive \(O(k^2)\) multiplication, this step takes \(O((j \cdot k)k)\) time.
    The total sum for this summation is the sum of all \(O(n)\) steps, i.e.,
    \begin{align*}
        \sum_{j=1}^{O(n)}O(j \cdot k^2) = O(k^2)\sum_{j=1}^{O(n)}j = O(k^2 \cdot n^2).
    \end{align*}
    Substituting \(k = O(m \log \satbase)\), this results in \(O(n^2m^2\log^2\satbase)\).
    This completes the proof.
\end{proof}
Note that the time complexity can be improved significantly by using the \(O(k \log
k)\) time multiplication algorithm by Harvey and van der Hoeven~\cite{harvey2021integer} and/or exploiting the fact that all involved terms are a power of two.

\subsection{Construction of the \acs{ILP}}
The core of our reduction is an \ac{ILP} formulation with a carefully crafted solution structure.
Later, each solution of this \ac{ILP} will correspond to one of five configuration types for the \acs{Bin Packing} problem. These configuration types represent different ways a \acs{3-SAT} variable can satisfy clauses.
The high-level idea is as follows:
Suppose, for each \(i \in [n]_0\) we have to select exactly two of the following five solution types in \(\mathcal{X}^{(i)}\) such that the sum of all selected solutions equals a target vector \(\vect{t}\).
\begin{align*}
    \mathcal{X}^{(i)} = \left\{
    \begin{pmatrix}
        C^{(i)}_{pos1} \\ \gamma^i \\ 0\\
    \end{pmatrix},
    \begin{pmatrix}
        C^{(i)}_{pos2} \\ \gamma^i \\ 0
    \end{pmatrix},
    \begin{pmatrix}
        C^{(i)}_{neg\vphantom{p2}} \\ 2 \cdot \gamma^i \\ 0
    \end{pmatrix},
    \begin{pmatrix}
        0 \vphantom{C^{(i)}_{pos1}} \\ \gamma^i\\ 1
    \end{pmatrix},
    \begin{pmatrix}
        0 \vphantom{C^{(i)}_{pos1}} \\ 0\\ 1
    \end{pmatrix}
    \right\}, \quad\quad \vect t = \begin{pmatrix}
                                       \sum_{j=0}^{m-1}{\gamma^j} \\ \sum_{i=0}^{n-1}{2 \cdot \gamma^i} \\ 2n - m
                                   \end{pmatrix}
\end{align*}

Recall that \(C_{pos1}^{(i)}, C_{pos2}^{(i)}, C_{neg\vphantom{p2}}^{(i)}\) are of the form \(\satbase^j\), thus when \(\satbase\) is sufficiently large, the first coordinate of \(\vect{t}\) ensures that each clause \(j \in [m]_0\) is selected (satisfied) exactly once.
The last two coordinates ensure a consistent assignment of the variables.
We formally show this later.

In the remainder of this work, given an integer \(i \in [n]_0\), we refer to the binary encoding of \(i\) as \(\vect{x^{\texttt{bin}}} = (x^{\texttt{bin}}_1, \ldots, x^{\texttt{bin}}_{\log n})\), with \(i = \sum_{\ell = 1}^{\log n} 2^{\log(n)-\ell} \cdot x_{\ell}^{\texttt{bin}}\).

The foundation of our \ac{ILP} construction is the formulation in \Cref{lem:ilp-jpt}, which uses \(O(\log n)\) variables and constraints to create \(n\) distinct solution vectors. Specifically, for each \(i \in [n]_0\) there exists a unique solution to \eqref{eq:lis-constraint} where the variable \(\tilde{r}_{\log n} = \satbase^i\) and the variables \(x_1^{\texttt{bin}}, \ldots, x_{\log n}^{\texttt{bin}}\) represent the binary encoding of \(i\).

We extend the \ac{ILP} given by \eqref{eq:lis-constraint} as described in the following theorem:
% \begin{theorem}
%     \label{thm:possible-solutions}
%     Given a well-structured \acs{3-SAT} encoding \(Z\) as defined in \Cref{eq:Z}, there exists an \ac{ILP} formulation \(A\vect x \leq \vect b\) with variable vector \(\vect x = (\vect\alpha, \vect\beta, \vect\chi)^\intercal \in \ZZ_{\geq 0}^{O(\log n)}\)
%     and right-hand side \(\vect b \in \mathbb{Z}^{O(\log n)}\) with exactly \(5n\) solutions:
%     %the following holds:% for three known dimensions \((\alpha_1,\alpha_2,\alpha_3)\) of \(x\):
%     \begin{align*}
%         \vect\alpha =
%         \begin{pmatrix}
%             \alpha_1\vphantom{C^{(i)}_{pos1}} \\ \alpha_2\vphantom{\gamma^i} \\ \alpha_3
%         \end{pmatrix}\in
%         \left\{
%         \begin{pmatrix}
%             C^{(i)}_{pos1} \\ \gamma^i \\ 0\\
%         \end{pmatrix},
%         \begin{pmatrix}
%             C^{(i)}_{pos2} \\ \gamma^i \\ 0
%         \end{pmatrix},
%         \begin{pmatrix}
%             C^{(i)}_{neg\vphantom{p2}} \\ 2 \cdot \gamma^i \\ 0
%         \end{pmatrix},
%         \begin{pmatrix}
%             0\vphantom{C^{(i)}_{pos1}} \\ \gamma^i\\ 1
%         \end{pmatrix},
%         \begin{pmatrix}
%             0\vphantom{C^{(i)}_{pos1}} \\ 0\\ 1
%         \end{pmatrix}
%         \middle| i \in [n]_0 \right\},
%     \end{align*}
%     where \(\vect\chi \in \{0, 1\}^{O(1)}\) and \(\vect\beta \in \mathbb{Z}_{\geq 0}^{O(\log n)}\) is a unique vector that is computable in polynomial time for fixed \(i\) and \(Z\).
% \end{theorem}

\begin{theorem}
    \label{thm:possible-solutions}
    Given a well-structured \acs{3-SAT} encoding \(Z\) as defined in \Cref{eq:Z}, there exists an \ac{ILP} formulation \(A\vect x \leq \vect b\) with variable vector \(\vect x = (\vect\alpha, \vect\beta, \vect\chi)^\intercal \in \ZZ_{\geq 0}^{O(\log n)}\)
    and right-hand side \(\vect b \in \mathbb{Z}^{O(\log n)}\) that has exactly \(5n\) solutions.

    These solutions are partitioned into $n$ groups of $5$, indexed by $i \in [n]_0$. For each index $i$, the subvector $\vect\alpha$ takes exactly one of the following $5$ distinct assignments:
    $$
        \vect\alpha =
        \begin{pmatrix}
            \alpha_1\vphantom{C^{(i)}_{pos1}} \\ \alpha_2\vphantom{\gamma^i} \\ \alpha_3
        \end{pmatrix}\in
        \left\{
        \begin{pmatrix}
            C^{(i)}_{pos1} \\ \gamma^i \\ 0\\
        \end{pmatrix},
        \begin{pmatrix}
            C^{(i)}_{pos2} \\ \gamma^i \\ 0
        \end{pmatrix},
        \begin{pmatrix}
            C^{(i)}_{neg\vphantom{p2}} \\ 2 \cdot \gamma^i \\ 0
        \end{pmatrix},
        \begin{pmatrix}
            0\vphantom{C^{(i)}_{pos1}} \\ \gamma^i\\ 1
        \end{pmatrix},
        \begin{pmatrix}
            0\vphantom{C^{(i)}_{pos1}} \\ 0\\ 1
        \end{pmatrix}
        \right\}.
    $$
    For each index $i$, the corresponding subvector $\vect\beta \in \mathbb{Z}_{\geq 0}^{O(\log n)}$ is uniquely determined (depending solely on $i$ and $Z$) and is computable in polynomial time. Finally, the subvector $\vect\chi$ always satisfies $\vect\chi \in \{0, 1\}^{O(1)}$.
\end{theorem}

We achieve this result through the introduction of \(O(\log n)\) constraints, \(O(\log n)\) auxiliary variables \(\vect\beta\), and 4 binary variables \(\vect\chi = (\chi_1, \ldots, \chi_4)\).
For improved readability, we deviate from a strict component-wise notation (e.g. \(\vect\beta = (\beta_1, \beta_2, \ldots)\)) and will implicitly include all introduced auxiliary variables within the vector \(\vect\beta\), unless explicitly stated otherwise.
In particular, we include variables in constraint \eqref{eq:lis-constraint} in this vector  (e.g. \(\vect\beta = (\tilde{r}_0, \ldots,\tilde{r}_{\log n}, \vect x^{\texttt{bin}}, \ldots)\)).
The values of these variables are unique for fixed \(i, Z\), as shown in~\cite{DBLP:conf/esa/JansenPT25}.
For all newly introduced variables, we argue this property separately.

In order to construct the desired solutions for \(\vect\alpha\), the clauses relevant for the \acs{3-SAT} variable \(v_i\) (i.e., \(C^{(i)}_{pos1}, C^{(i)}_{pos2},C^{(i)}_{neg\vphantom{p2}}\)) need to be extracted from \(Z\).
We can achieve this efficiently since the variables \(x^{\texttt{bin}}_1, \ldots, x^{\texttt{bin}}_{\log n}\) already represent the binary encoding of \(i\), and thus describe the path through a binary search tree to reach the \(i\)-th block of \(Z\).
This idea could also be useful to compactly encode other problems in \ac{ILP} formulations.
The following constraints extract the corresponding block from \(Z\) into a variable \(z_{\log n}\).
Again, we list the equivalent linear constraints in \Cref{sec:equalityConst} and focus on the nonlinear version for improved readability.
An example procedure is illustrated in \Cref{fig:extraction}.
\begin{constraints}
    z_0     & = Z
    \label{eq:s_logn1}                                                                                            \\
    z_j     & = q_j \cdot (\satbase^m)^{3n/2^{j+1}} + r_j                                      & \forall j \in [\log n]_0
    \label{eq:s_logn2}                                                                                            \\
    r_j     & \leq (\satbase^m)^{3n/2^{j+1}} -1                                                & \forall j \in [\log n]_0
    \ineqtag\label{eq:s_logn3}                                                                                            \\
    z_{j+1} & = q_j \cdot x^{\texttt{bin}}_{j+1} + r_j \cdot (1 - x^{\texttt{bin}}_{j+1}) & \forall j \in [\log n]_0
    \ntag\label{eq:s_logn4}
\end{constraints}

For the proofs of \Crefrange{claim:blocks}{claim:abcd}, we refer the reader to~\Cref{sec:proof-thm-claims}.
\begin{restatable}[\faCut]{claim}{claimBlocks}
    \label{claim:blocks}
    The constraints given in \Crefrange{eq:s_logn1}{eq:s_logn4} ensure that \(z_{\log n}\) contains exactly the block of \(Z\) corresponding to variable \(v_i\), i.e., for $i \in [n]_0$ we get \(z_{\log n} = C^{(i)}_{pos1} \cdot (\satbase^m)^0 + C^{(i)}_{pos2} \cdot (\satbase^m)^1 + C^{(i)}_{neg\vphantom{p2}} \cdot (\satbase^m)^2\). Additionally, the values of all introduced variables are unique for fixed $i$ and $Z$.
\end{restatable}

\begin{figure}[t]
    \centering
    \includegraphics[width=\columnwidth]{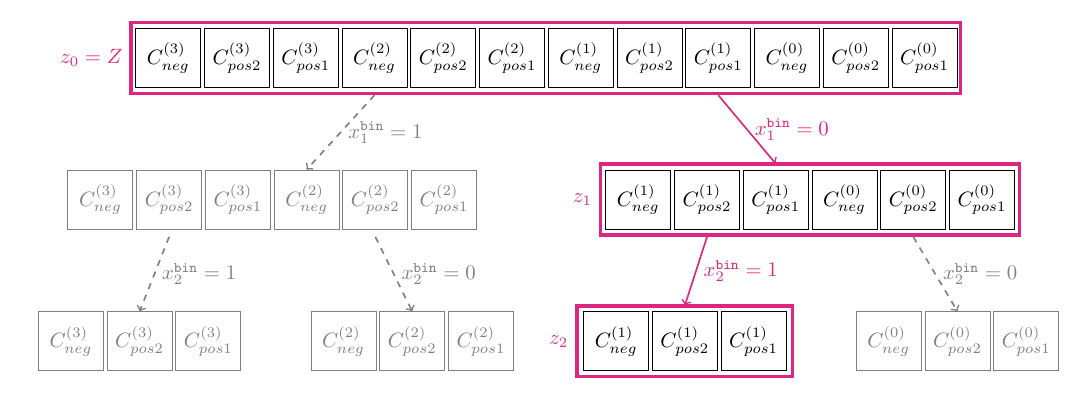}%
    \caption{Extraction of the block corresponding to variable \(v_1\) (\(\vect x^{\texttt{bin}} = (0, 1)\)) with Constraints \eqref{eq:s_logn1} to \eqref{eq:s_logn4}.}
    \label{fig:extraction}
\end{figure}

Extracting the values \(C^{(i)}_{pos1}, C^{(i)}_{pos2}, C^{(i)}_{neg\vphantom{p2}}\) into auxiliary variables \(c_{pos1}, c_{pos2}, c_{neg}\) can be done with the following constraints:
\begin{constraints}
    z_{\log n}           & = q_c \cdot \satbase^m + c_{pos1}
    \label{eq:cpscn1}                              \\
    q_c                   & = c_{neg} \cdot \satbase^m + c_{pos2}
    \label{eq:cpscn2}                              \\
    c_{pos1}, c_{pos2}, c_{neg} & \leq \satbase^m-1
    \ineqtag\label{eq:cpscn3}
\end{constraints}

\begin{restatable}[\faCut]{claim}{claimCs}
    \label{claim:cs}
    The constraints given in \Crefrange{eq:s_logn1}{eq:cpscn3} ensure that the variables $c_{pos1},c_{pos2}$ and $c_{neg}$ match the corresponding values in the definition of $Z$ in \Cref{eq:Z} for a given \(i\), i.e., $c_{pos1} = C^{(i)}_{pos1},c_{pos2}=C^{(i)}_{pos2}$ and $c_{neg}=C^{(i)}_{neg\vphantom{p2}}$. Additionally, the values of all introduced variables are unique for fixed $i$ and $Z$.
\end{restatable}

Using these auxiliary variables, we define the solution values for \(\vect\alpha\), as required by \Cref{thm:possible-solutions}.
This construction utilizes binary variables \(\alpha_3, \chi_1, \chi_2, \chi_3, \chi_4 \in \{0, 1\}\) and integer variables \(\alpha_1, \alpha_2\):

\begin{constraints}
    \alpha_1     & = c_{pos1} \cdot \chi_1 + c_{pos2} \cdot \chi_2 + c_{neg} \cdot \chi_3
    \ntag\label{eq:alpha1}                                                                                                                                             \\
    \alpha_2     & = \tilde{r}_{\log n} \cdot \chi_1 + \tilde{r}_{\log n} \cdot \chi_2 + 2 \cdot \tilde{r}_{\log n} \cdot \chi_3 + \tilde{r}_{\log n} \cdot \chi_4
    \ntag\label{eq:alpha2}                                                                                                                                             \\
    1 - \alpha_3 & = \chi_1 + \chi_2 + \chi_3
    \label{eq:alpha3}                                                                                                                                                  \\
    \chi_4       & \leq \alpha_3
    \ineqtag\label{eq:alpha4-linear}
\end{constraints}

\begin{restatable}[\faCut]{claim}{claimAbcd}
    \label{claim:abcd}
    For \(i \in [n]_0\), the constraints given in \Crefrange{eq:s_logn1}{eq:alpha4-linear} imply that
    \begin{align*}
        \begin{pmatrix}
            \alpha_1 \vphantom{C^{(i)}_{pos1}} \\ \alpha_2\vphantom{\gamma^i} \\ \alpha_3
        \end{pmatrix}\in
        \left\{
        \begin{pmatrix}
            C^{(i)}_{pos1} \\ \gamma^i \\ 0\\
        \end{pmatrix},
        \begin{pmatrix}
            C^{(i)}_{pos2} \\ \gamma^i \\ 0
        \end{pmatrix},
        \begin{pmatrix}
            C^{(i)}_{neg\vphantom{p2}} \\ 2 \cdot \gamma^i \\ 0
        \end{pmatrix},
        \begin{pmatrix}
            0\vphantom{C^{(i)}_{pos1}} \\ \gamma^i\\ 1
        \end{pmatrix},
        \begin{pmatrix}
            0\vphantom{C^{(i)}_{pos1}} \\ 0\\ 1
        \end{pmatrix}
        \right\}.
    \end{align*}
    All other introduced variables are binary (part of \(\vect\chi\)).
\end{restatable}

As established in \Cref{claim:blocks,claim:cs,claim:abcd}, the desired properties for \(\vect \alpha\) (see \Cref{thm:possible-solutions}) can be enforced by a compact nonlinear system with \(O(\log n)\) variables and \(O(\log n)\) constraints.
Our goal is to linearize this system while preserving its compact size.
First, we address \eqref{eq:lis-constraint}, providing its equivalent linear formulation in \Cref{sec:equalityConst}.
For this specific subsystem, \cite{DBLP:conf/esa/JansenPT25} showed that all variables remain unique for any fixed \(i\).
This leaves the nonlinear constraints \eqref{eq:s_logn4}, \eqref{eq:alpha1}, and \eqref{eq:alpha2}.
We apply \Cref{lem:linearization} to linearize these constraints without introducing any additional variables.
Because no new variables are added during this process, the crucial uniqueness property established in \Cref{claim:blocks,claim:cs,claim:abcd} is preserved.
A complete list of these final linearized constraints is provided in \Cref{sec:linearConst}.
With this, we have proven \Cref{thm:possible-solutions}.

\subsection{Constructing the \acs{Bin Packing} Instance}
\label{sec:constructing-binpacking}

In the previous section, we showed how to construct a compact \ac{ILP}, allowing only very specific solution vectors (see \Cref{thm:possible-solutions}).
Now, we show how to construct a \acs{Bin Packing} instance from this \ac{ILP}.
%The core idea of our reduction is defining \acs{Bin Packing} such that an item vector \(x\) fits into a single bin if and only if \(x\) is a feasible solution to the ILP.

First, we aggregate the \ac{ILP} from \Cref{thm:possible-solutions} into an \ac{ILP} with a single knapsack constraint.
To this end, we use the aggregation technique in \cite{DBLP:conf/esa/JansenPT25} that is stated in \Cref{lem:aggregation}.
The \ac{ILP} from \Cref{thm:possible-solutions} contains inequality constraints, in order to apply \Cref{lem:aggregation}, we first transform it into an \ac{ILP} with equality constraints by introducing slack variables \(\vect y \in \ZZ^{O(\log n)}\).
We list the final equality constraints \Crefrange{eq:lis-constraint-linear}{eq:alpha4-eq} in \Cref{sec:equalityConst} for the sake of completeness.
Observe that since we introduce at most one slack variable per inequality, the number of variables and constraints remains in \(O(\log n)\).

Applying the aggregation in \Cref{lem:aggregation} to the \ac{ILP} of form \(A\vect x = \vect b\), we obtain an \ac{ILP} with a single linear constraint of the form \(\vect s^\intercal\vect x = B\) where \(\vect x = (\vect \alpha, \vect \beta, \vect \chi, \vect y)^\intercal\).
Note that the aggregation adds \(O(\log (n))\) extra slack variables to the slack variable vector \(\vect y\).
Thus, we still have \(\vect s, \vect x \in \ZZ^{O(\log n)}\).
Recall, that \(\vect \alpha\) contains three variables and \(\vect \beta = (\vect x^\texttt{bin}, \tilde{\vect r}, \vect z, \vect q, \vect r, c_{pos1}, c_{pos2}, c_{neg}, \vect y, \ldots) \in \ZZ^{O(\log n)}\) is the variable vector containing all variables with unique values for any fixed \(i, Z\) (see \Cref{thm:possible-solutions}).

\begin{restatable}[\faCut]{lemma}{ilpAggrDim}
    \label{aggilp-dim}
    The aggregated \ac{ILP} of the form \(\vect s^\intercal\vect x = B\) for \(\vect x=(\vect \alpha, \vect \beta, \vect \chi, \vect y)\) has dimensions \(\vect s, \vect x \in \mathbb{Z}_{\geq 0}^{O(\log n)}\) and \(||\vect s||_{\infty}, B \leq \gamma^{O(n^2 \log n)}\).
    Furthermore, \(||\vect x||_{\infty} \leq \gamma^{O(n^2)}\) holds for every feasible solution \(\vect x\).
\end{restatable}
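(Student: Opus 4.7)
The plan is to trace coefficient sizes through each of the transformations that produce the aggregated \ac{ILP}: the nonlinear \ac{ILP} from \Cref{thm:possible-solutions}, its linearization via \Cref{lem:linearization}, the introduction of slack variables to turn inequalities into equalities, and finally the aggregation of \Cref{lem:aggregation}. I begin with the dimension count. \Cref{thm:possible-solutions} yields an \ac{ILP} with $O(\log n)$ variables and $O(\log n)$ constraints. By \Cref{lem:linearization}, linearizing \eqref{eq:lis-constraint}, \eqref{eq:s_logn4}, \eqref{eq:alpha1}, and \eqref{eq:alpha2} introduces no new variables and only a constant factor more inequalities per constraint. Converting the remaining inequalities into equalities adds at most one slack per inequality, and \Cref{lem:aggregation} adds exactly $d + 1 = O(\log n)$ further slack variables. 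Hence $s, x \in \mathbb{Z}_{\ge 0}^{O(\log n)}$.

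Next I would bound the largest entries of $A$, $b$, and $u$ to feed into \Cref{lem:aggregation}. The biggest coefficients appear in the block-extraction constraints \eqref{eq:s_logn1}–\eqref{eq:s_logn4}: the factor $(\gamma^m)^{3n/2}$ (case $j = 0$ in \eqref{eq:s_logn2}) and the constant $Z \le \gamma^{3nm}$ from \eqref{eq:s_logn1}, both bounded by $\gamma^{O(n^2)}$ because $m = O(n)$. The powers $\gamma^{2^{\log(n) - \ell}} \le \gamma^{n/2}$ appearing in the linearized form of \eqref{eq:lis-constraint}, the big-$M$ constants produced by \Cref{lem:linearization} (chosen as the known variable upper bound), and the right-hand sides of the Euclidean-division constraints are all at most $\gamma^{O(n^2)}$. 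For the external upper bounds, the largest variable is $z_0 = Z$ with $u \le \gamma^{O(n^2)}$, while $\tilde{r}_{\log(n)}, \alpha_1, \alpha_2, c_{p1}, c_{p2}, c_n$ are all at most $\gamma^{O(n)}$. This yields $\Delta := \|A\|_\infty \le \gamma^{O(n^2)}$, $\|b\|_\infty \le \gamma^{O(n^2)}$, $\|u\|_\infty \le \gamma^{O(n^2)}$, and $U := \sum_j u_j \le O(\log n)\cdot \gamma^{O(n^2)} = \gamma^{O(n^2)}$.

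Plugging into \Cref{lem:aggregation}, the large base number becomes $\lisM = \Delta U + \max(\|b\|_\infty, \|u\|_\infty) + \Delta + 2 \le \gamma^{O(n^2)}$. The coefficients of the aggregated equation \eqref{eq:aggregated} are each of the form $\lisM^{i-1}\cdot(\text{entry of }A)$, $\lisM^{k+j-1}$, or $\lisM^{k+d}$ times a small constant, all at most $\lisM^{k+d}$ with $k + d = O(\log n)$. Therefore $\|s\|_\infty \le \lisM^{O(\log n)} = \gamma^{O(n^2\log n)}$. The right-hand side $B$ is a sum of $O(\log n)$ terms each bounded by $\lisM^{k+d}\cdot \gamma^{O(n^2)}$, so $B \le \gamma^{O(n^2\log n)}$. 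Finally, any feasible solution $x = (\alpha, \beta, \chi, y)$ satisfies the original upper bounds $x_j \le u_j$, and the aggregation slack variables obey $y_j = u_j - x_j \le u_j$ together with $y_{d+1} = 0$, so $\|x\|_\infty \le \|u\|_\infty \le \gamma^{O(n^2)}$.

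The main technical nuisance I anticipate is the bookkeeping for the linearization coefficients: the big-$M$ constant used in \Cref{lem:linearization} must be chosen large enough to dominate the range of the variable being expressed yet small enough to stay within $\gamma^{O(n^2)}$. This requires fixing a concrete upper bound on each variable that appears in a nonlinear constraint before linearizing it, and then verifying constraint-by-constraint that no family introduces a coefficient exceeding $\gamma^{O(n^2)}$; once this is done, the remaining estimates are purely arithmetic.
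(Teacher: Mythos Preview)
Your proposal is correct and follows essentially the same route as the paper's proof: both trace the dimension through linearization, slack introduction, and aggregation, then bound $\Delta$, $\|b\|_\infty$, $\|u\|_\infty$, and $U$ by $\gamma^{O(n^2)}$ (using that the largest coefficients come from the block-extraction constraints and $Z \le \gamma^{3nm}$), deduce $M \le \gamma^{O(n^2)}$, and finish with $\|s\|_\infty, B \le M^{k+d} = \gamma^{O(n^2\log n)}$ and $\|x\|_\infty \le \gamma^{O(n^2)}$. Your accounting is slightly more detailed in places (explicitly naming which constraint families dominate, and noting $y_{d+1}=0$), but the argument is the same.
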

The above lemma is a direct application of the technique from~\cite{DBLP:conf/esa/JansenPT25} (summarized in \Cref{lem:aggregation}), therefore, we defer the proof of this lemma to \Cref{sec:aggregation}.

Note that the solution vectors satisfying the equation \(\vect s^\intercal\vect x = B\) are exactly those of the above constructed \ac{ILP} (those listed in \Cref{thm:possible-solutions}).

We are now ready to construct a family of \acs{Bin Packing} instances \(\mathcal{I}^{\texttt{BP}}(\hat{\vect \chi})\) where \(\hat{\vect \chi} \in \{0, \ldots, 2n\}^4\).
% An illustration of a single constructed \acs{Bin Packing} instance \(\mathcal{I}^{\texttt{BP}}(\hat{\vect \chi})\) is given in \Cref{fig:bin-packing}.
Each instance has \(d = O(\log n)\) item types, defined by the item size vector \(\vect s\) and bin capacity \(B\).
Intuitively, this implies that for any feasible configuration \(\vect x\), the inequation \(\vect s^\intercal\vect x \leq B\) holds.
We now set the item multiplicities \(\vect a\) such that a \acs{Bin Packing} solution is forced to select specific configurations, as we will prove later (\Cref{lem:linear-combination}).
We set \(\vect a \coloneqq (\hat{\vect \alpha}, \hat{\vect \beta}, \hat{\vect \chi}, \hat{\vect y})^\intercal\), where each component is a vector, to be defined in the following paragraph.
The first component is set to
\begin{align}\label{eq:alpha-hat}
    \hat{\vect \alpha} \coloneqq \left(\sum_{j=0}^{m-1}{\gamma^j}, \sum_{i=0}^{n-1}{2 \cdot \gamma^i}, 2n -m\right)^\intercal.
\end{align}
Intuitively, this ensures that any solution with \(2n\) bins is guaranteed to (1) select each clause, and (2) correspond to a consistent assignment of \acs{3-SAT} variables \(\{v_i\} \mapsto \{\texttt{true}, \texttt{false}\}\).
We formally prove this in \Cref{lem:linear-combination}.

Now, let \(\vect \beta(i)\) denote the unique variable values for each \ac{ILP}-variable in \(\vect \beta\).
Note that by \Cref{thm:possible-solutions} these values are unique for each \(i \in [n]_0\) and can be computed in polynomial time.
For example, consider the variables appearing in \eqref{eq:s_logn1}-\eqref{eq:s_logn4}.
For fixed \(Z\) and \(x^{\texttt{bin}}_i\), the values of \(q_i, r_i\), and \(z_i\) can be calculated by \(O(\log n)\) Euclidean divisions.
We then set
\begin{align}\label{eq:beta-hat}
    \hat{\vect \beta} \coloneqq \sum_{i=0}^{n-1}2 \cdot \vect \beta(i).
\end{align}

Let now \(\mathcal{S}\) be any set of \(2n\) solutions to \(\vect s^\intercal(\vect \alpha, \vect \beta, \vect \chi ,\vect y) = B\).
For any variable \(x_i\) (e.g., \(\alpha_i, c_{pos1}, \chi_i, y_i\)), we define \(\hat{x}_i\) as the sum of its values across all \(2n\) solutions in \(\mathcal{S}\): \(\hat{x}_i \coloneqq \sum_{s \in \mathcal{S}}x_i(s)\).
The vector \(\hat{\vect y}\) is composed of these summed values for each of the \(O(\log n)\) slack variables in \(\vect y\).
The value of each component can be calculated from the known values of the other variables (i.e., \(\hat{\vect \alpha}, \hat{\vect \beta}, \hat{\vect \chi}\)), since each constraint contains at most one slack variable.

We demonstrate how to calculate \(\hat{\vect y}\) with the example of component \(y^{cc}_1\) in the first constraint of \eqref{alpha1-eq}:
\begin{align*}
    \alpha_1 - c_{pos1} + U^{cc} \cdot \chi_1 + y_1^{cc} = U^{cc}.
\end{align*}
Here \(y^{cc}_1\) is the single slack variable and \(U^{cc}\) is a known upper bound. Then:
\begin{equation*}
    \begin{array}{crl}
                            & \sum_{s \in \mathcal{S}}\left(\alpha_1(s) - c_{pos1}(s) + U^{cc} \cdot \chi_1(s) + y_1^{cc}(s) \right) & = \sum_{s \in \mathcal{S}}U^{cc} \\
        \Longleftrightarrow & \hat{\alpha}_1 - \hat{c}_{pos1} + U^{cc} \cdot \hat{\chi}_1 + \hat{y}_1^{cc}                           & = 2n \cdot U^{cc}.
    \end{array}
\end{equation*}
Since \(\hat{y}_1^{cc}\) is the only unknown, it can be calculated.
We set \(\hat{\vect y}\) by solving each of the \(O(\log n)\) constraints this way.
Then, the multiplicities in instance \(\mathcal{I}^{\texttt{BP}}(\hat{\vect \chi})\) are given by: \(\vect a = (\hat{\vect \alpha}, \hat{\vect \beta}, \hat{\vect \chi}, \hat{\vect y})\).
Note that the above construction ensures \(\vect s^\intercal \vect a = 2n \cdot B\).
We also note that while it is sufficient for our analysis to consider all possibilities for \(\hat{\vect \chi}\), the number of considered vectors could be reduced significantly.
When considering the corresponding variables \(\vect \chi = (\chi_1, \chi_2, \chi_3, \chi_4)\) in the constructed \acs{ILP}, we notice that \(\hat{\chi_1}, \hat{\chi_2}\), and \(\hat{\chi_3}\) count the number of times a clause is selected.
Since we aim to choose all \(m\) clauses, \(\hat{\chi}_3 = m - \hat{\chi}_1 - \hat{\chi}_2\).
A similar argument can be made for \(\hat{\chi}_4\).
This variable counts the usage of the first \emph{slack} solution.
This number can be calculated from \(\hat{\chi}_3\).
These two arguments bring the number of considered vectors from \((2n+1)^4\) to \((2n+1)^2\) since only \(\hat{\chi}_1\) and \(\hat{\chi}_2\) need to be guessed.

The polytope defining all feasible configurations in the \acs{Bin Packing} instance is given by \(P = \{\vect x \in \ZZ_{\geq 0}^d | \vect s^\intercal\vect x \leq B\}\).
We prepare our final proof by showing that each solution with at most \(2n\) bins uses only configurations \(\vect x\) that satisfy \(\vect s^\intercal\vect x = B\).

\begin{claim}
    \label{claim:eqConfigs}
    Any solution to \(\mathcal{I}^{\texttt{BP}}(\hat{\vect \chi})\) with at most \(2n\) bins uses only configurations \(\vect x\) that satisfy \(\vect s^\intercal\vect x = B\).
\end{claim}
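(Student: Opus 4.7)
The plan is a short counting argument exploiting the calibration $s^T a = 2n \cdot B$ baked into the construction of the multiplicities.

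First I would recall from the construction that $a = (\hat{\alpha},\hat{\beta},\hat{\chi},\hat{y})$ was defined precisely so that each of the $O(\log n)$ equality constraints of the aggregated ILP, when summed component-wise across $2n$ feasible solutions, is satisfied. Since the single aggregated knapsack equation $s^T x = B$ is one such constraint (or a linear combination of them), this immediately yields $s^T a = 2n \cdot B$. I would state this as the starting observation, referencing the paragraph where $\hat{y}$ is determined by solving each constraint with the slack variable as the unknown.

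Next, consider any feasible bin packing solution using $k \le 2n$ bins, realized by configurations $x^{(1)},\dots,x^{(k)}$ with $x^{(j)} \in P$, i.e.\ $s^T x^{(j)} \le B$. The packing constraint says $\sum_{j=1}^{k} x^{(j)} = a$, and taking inner product with $s$ gives
\[
\sum_{j=1}^{k} s^T x^{(j)} \;=\; s^T a \;=\; 2n \cdot B.
\]
On the other hand, by the bin capacity constraint and $k \le 2n$,
\[
\sum_{j=1}^{k} s^T x^{(j)} \;\le\; k \cdot B \;\le\; 2n \cdot B.
\]
Chaining these forces equality everywhere, which implies both $k = 2n$ and $s^T x^{(j)} = B$ for every bin $j$, proving the claim. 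I would conclude with a brief remark that, as a side product, every feasible solution with at most $2n$ bins in fact uses exactly $2n$ bins.

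I do not anticipate any technical obstacle here: the argument is a one-line double counting. The only point to handle carefully is justifying $s^T a = 2n \cdot B$ cleanly by pointing back to the construction of $\hat{y}$ from the aggregated equation, rather than re-deriving it from the $O(\log n)$ original constraints.
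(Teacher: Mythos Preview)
Your proposal is correct and takes essentially the same approach as the paper: both rely on the calibration $s^T a = 2n \cdot B$ and a volume/pigeonhole count against the bin capacity. The only cosmetic difference is that the paper phrases it as a contradiction (one strict bin leaves too much for the remaining $2n-1$ bins), whereas you chain inequalities directly and obtain $k = 2n$ as a byproduct.
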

\begin{proof}
    First, note that by construction of \(\mathcal{I}^{\texttt{BP}}(\hat{\vect \chi})\), the total size of items is \(\vect s^\intercal \vect a = 2n \cdot B\).
    Now, assume for the sake of contradiction that there exists a solution to the constructed \acs{Bin Packing} instance with \(2n\) bins that uses a configuration \(\vect x\) with \(\vect s^\intercal \vect x < B\).
    Then, the remaining size of items is \(2n \cdot B - \vect s^\intercal \vect x > (2n - 1) \cdot B\) and thus cannot be packed into the remaining \(2n - 1\) bins.
    A contradiction.
\end{proof}

Now, define the set of feasible solutions to the constructed \ac{ILP} as
\begin{align*}
    \mathcal{X} = \left\{
    \vect{x_1^{(i)}}, \ldots, \vect{x_5^{(i)}} =
    \begin{pmatrix} C^{(i)}_{pos1} \\ \gamma^i \\ 0 \\ \vect\beta(i) \\ \vect \chi \\ \vect y \end{pmatrix},
    \begin{pmatrix} C^{(i)}_{pos2} \\ \gamma^i \\ 0 \\ \vect\beta(i) \\ \vect \chi \\ \vect y \end{pmatrix},
    \begin{pmatrix} C^{(i)}_{neg\vphantom{p2}} \\ 2\gamma^i \\ 0 \\ \vect\beta(i) \\ \vect \chi \\ \vect y \end{pmatrix},
    \begin{pmatrix} 0\vphantom{C^{(i)}_{pos1}} \\ \gamma^i \\ 1 \\ \vect\beta(i) \\ \vect \chi \\ \vect y \end{pmatrix},
    \begin{pmatrix} 0\vphantom{C^{(i)}_{pos1}} \\ 0 \\ 1 \\ \vect\beta(i) \\ \vect \chi \\ \vect y \end{pmatrix}
    \,\middle|\, i \in [n]_0
    \right\}.
\end{align*}
\Cref{claim:eqConfigs} implies that any \acs{Bin Packing} solution uses only configurations \(\vect x \in \mathcal{X}\).
We are ready to prove the final lemma, stating that the \acs{3-SAT} instance \(\mathcal{I}^{\texttt{SAT}}\) is solvable if and only if there exists a solution to any of the \acs{Bin Packing} instances \(\mathcal{I}^{\texttt{BP}}(\hat{\vect \chi})\).
An illustration of a constructed \acs{Bin Packing} instance is given in \Cref{fig:bin-packing}.
\begin{figure}[ht]
    \centering
    \includegraphics[width=0.9\columnwidth]{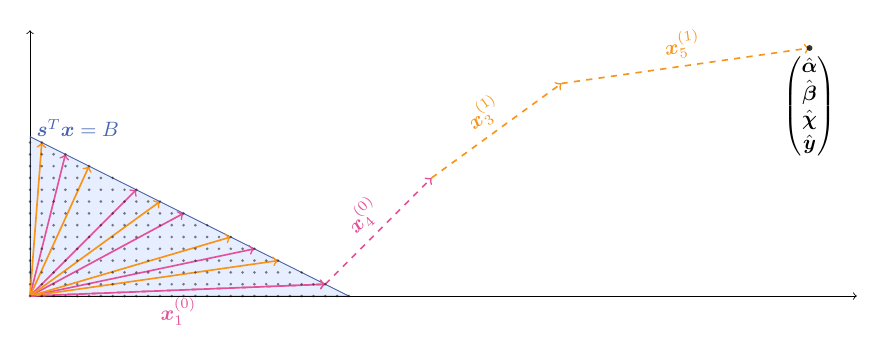}%
    \caption{Example for a \ac{3-SAT} instance with \(2\) variables \(v_0\) (pink) and \(v_1\) (yellow).
        The illustrated solution corresponds to the assignment \(v_0 = \texttt{true}\) and \(v_1 = \texttt{false}\).}
    \label{fig:bin-packing}
\end{figure}

\begin{lemma}
    \label{lem:linear-combination}
    Given an integer \(\gamma > 4n\), the well-structured \acs{3-SAT} instance \(\mathcal{I}^{\texttt{SAT}}\) is a \textsc{Yes}-Instance if and only if there exists a \(\hat{\vect \chi} \in \{0, \ldots, 2n\}^4\) such that \(\mathcal{I}^{\texttt{BP}}(\hat{\vect \chi})\) has a solution with at most \(2n\) bins.
\end{lemma}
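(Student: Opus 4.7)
The plan is to prove both directions of the biconditional by classifying exactly which configurations any feasible bin packing uses. By \Cref{claim:eqConfigs} combined with the total-size identity $s^T a = 2n\cdot B$, any solution with at most $2n$ bins must use exactly $2n$ tightly packed bins, each containing a configuration from $\mathcal{X} = \{x_1^{(i)}, \ldots, x_5^{(i)} : i \in [n]_0\}$. I will denote by $n_i^j$ the number of bins packing $x_j^{(i)}$, so that $\sum_{i,j} n_i^j = 2n$.

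For the backward direction ($\Leftarrow$), the core idea is a digit-wise, carry-free reading of the multiplicity equations in base $\gamma$. Projecting the equation $\sum_{\text{bin}} \beta = \hat{\beta}$ onto the $\tilde{r}_{\log(n)}$-coordinate yields $\sum_{i} t_i \gamma^i = \sum_{i} 2\gamma^i$, where $t_i \coloneqq \sum_{j=1}^5 n_i^j$. Since $t_i \leq 2n < \gamma$, this forces $t_i = 2$ for every $i$. Similarly, the $\alpha_2$-equation gives $n_i^1 + n_i^2 + 2 n_i^3 + n_i^4 = 2$ per $i$, so subtracting from $t_i=2$ produces $n_i^3 = n_i^5$; and the $\alpha_1$-equation $\hat{\alpha}_1 = \sum_{j=0}^{m-1}\gamma^j$ forces every clause exponent to appear exactly once among the $C_{p1}^{(i)}, C_{p2}^{(i)}, C_n^{(i)}$ contributions of the chosen configurations. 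A short finite case analysis on $(n_i^1,\ldots,n_i^5)$ then shows that each variable falls into one of two regimes: either $n_i^3 = n_i^5 = 1$ with the other three counts zero (interpret $v_i = \texttt{false}$), or $n_i^3 = n_i^5 = 0$ and $n_i^1 + n_i^2 + n_i^4 = 2$ (interpret $v_i = \texttt{true}$). Since each clause is covered by exactly one literal that matches the interpreted assignment, $\phi$ satisfies $\mathcal{I}^{\texttt{SAT}}$.

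For the forward direction ($\Rightarrow$), I translate a satisfying assignment $\phi$ into a concrete packing. For each clause $C_j$, fix one literal that $\phi$ satisfies; this assigns $C_j$ to a variable $v_i$ and one of its three well-structured slots ($p1$, $p2$, or $n$). Let $k_i$ be the number of clauses assigned to $v_i$. I then allocate two bins to $v_i$ as follows: if $\phi(v_i) = \texttt{true}$, use the configurations from $\{x_1^{(i)}, x_2^{(i)}\}$ corresponding to its $k_i$ covered positive clauses, padded with $x_4^{(i)}$ to reach two bins; if $\phi(v_i) = \texttt{false}$ and $k_i = 1$, use $x_3^{(i)}$ paired with $x_5^{(i)}$; otherwise use two copies of $x_4^{(i)}$. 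A direct accounting against \Cref{eq:alpha-hat,eq:beta-hat} and the defining equations of $\hat{y}$ confirms that all components of the multiplicity vector $a$ are met exactly by these $2n$ bins, and $\hat{\chi} \in \{0,\ldots,2n\}^4$ is simply the vector of total counts of $\chi_1,\ldots,\chi_4$ across the chosen configurations.

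The main obstacle is the carry-free arithmetic at the heart of the backward direction: I must verify that the coefficient of $\gamma^i$ in each relevant equation is strictly less than $\gamma$. The worst case occurs in $\hat{\alpha}_2$, where the coefficient is $2 n_i^3 + n_i^1 + n_i^2 + n_i^4 \leq 2 \cdot 2n = 4n$ in the extreme case that a single variable absorbs all bins; this is exactly why $\gamma > 4n$ is the precise threshold needed to force digit-wise equality. Once this carry-free lemma is established, the case analysis on $(n_i^1,\ldots,n_i^5)$ and the mechanical verification of multiplicities in the forward direction are both routine.
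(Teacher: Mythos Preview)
Your proof is correct and follows essentially the same approach as the paper: both directions hinge on a carry-free, digit-wise reading of the $\alpha_1$ and $\alpha_2$ components in base $\gamma$ (which is precisely where the hypothesis $\gamma > 4n$ enters), and your forward construction via a per-clause choice of satisfying literal is identical to the paper's satisfier function $S$. The only minor difference is that in the backward direction you additionally project onto the $\tilde r_{\log(n)}$-coordinate of $\hat\beta$ to obtain $\sum_k n_i^k = 2$ for each $i$ (and hence $n_i^3 = n_i^5$), whereas the paper extracts everything it needs from $\alpha_1$ and $\alpha_2$ alone; your extra step is valid and yields a slightly cleaner case analysis, but it is not required for the argument to go through.
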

\begin{proof}
    Let the well-structured \acs{3-SAT} instance \(\mathcal{I}^\texttt{SAT}\) be given as in \Cref{def:structured-3sat}.

    We first remember that due to the aggregation, there cannot be any carry-overs between the variable dimensions in the \acs{ILP}.
    Therefore, when solving the \acs{Bin Packing} instance, we ask how many configurations of each type to use i.e., we solve the linear combination
    \begin{align*}
        \sum_{i = 0}^{n-1}\sum_{k=1}^{5}{\lambda^{(i)}_{k} \vect{x^{(i)}_k}} = \vect a =
        \begin{pmatrix}
            \sum_{j=0}^{m-1}{\gamma^j}         \\
            \sum_{i=0}^{n-1}{2 \cdot \gamma^i} \\
            2n - m                             \\
            \hat{\vect \beta}                  \\
            \hat{\vect \chi}                   \\
            \hat{\vect y}
        \end{pmatrix}
        \text{, with } \sum_{i=0}^{n-1}\sum_{k=1}^{5}{\lambda^{(i)}_{k}} \leq 2n.
    \end{align*}

    For the remainder of this proof, we will refer to \(\vect{x^{(i)}_1}\), \(\vect{x^{(i)}_2}\), and \(\vect{x^{(i)}_3}\) as \emph{clause-paying} vectors and to \(\vect{x^{(i)}_4}\) and \(\vect{x^{(i)}_5}\) as \emph{slack} vectors.
    Similarly, we refer to \(\lambda^{(i)}_1\), \(\lambda^{(i)}_2\), and \(\lambda^{(i)}_3\) as \emph{clause-paying} coefficients.

    "$\Rightarrow$": We start with the "if" direction and assume that there exists a satisfying assignment \(\phi : \{v_0, \ldots v_{n-1}\} \mapsto \{\texttt{true}, \texttt{false}\}\) for the \acs{3-SAT} instance.
    First, define a \emph{satisfier function} \(S: \{C_1, \ldots, C_m\} \mapsto \{v_0, \ldots, v_{n-1}, \lnot v_0, \ldots, \lnot v_{n-1}\}\), such that the following conditions hold:
    \begin{itemize}
        \item  $S(C_j)$ is a literal that appears in the clause $C_j$ and
        \item the literal $S(C_j)$ evaluates to \texttt{true} when applying $\phi$ i.e., \(\phi(S(C_j)) = \texttt{true}\).
    \end{itemize}
    Since \(\phi\) is a satisfying assignment, such a function exists.

    For each variable \(v_i\), let \(j, k, \ell\) be the indices of the clauses where \(v_i\) appears positively and negatively, respectively, i.e., \(C^{(i)}_{pos1} = \gamma^{j}\), \(C^{(i)}_{pos2} = \gamma^{k}\) and \(C^{(i)}_{neg\vphantom{p2}} = \gamma^{\ell}\).
    Now, we set the coefficients \(\lambda^{(i)}_k\) for each variable \(v_i\) as follows:

    \textit{Case 1:} (\(\phi(v_i) = \texttt{true}\)).
    Set \(\lambda^{(i)}_1 = 1\) if \(S(C_j) = v_i\) and similarly \(\lambda^{(i)}_2 = 1\) if \(S(C_k) = v_i\).
    Additionally, set \(\lambda^{(i)}_4 = 2 - (\lambda^{(i)}_1 + \lambda^{(i)}_2)\).
    All other coefficients are set to 0.

    \textit{Case 2:} (\(\phi(v_i) = \texttt{false}\)).
    If \(S(C_\ell) = \lnot v_i\), then set \(\lambda^{(i)}_3 = 1\) and \(\lambda^{(i)}_5 = 1\).
    Otherwise, set \(\lambda^{(i)}_4 = 2\).
    All other coefficients are set to 0.

    We must now verify that the constructed linear combination produces the target vector.
    For the first component, note that by construction each clause-paying coefficient (i.e., \(\lambda^{(i)}_1, \lambda^{(i)}_2\), and \(\lambda^{(i)}_3\)) is set to 1 if and only if its corresponding literal was chosen by the satisfier function \(S\).
    Since \(S\) selects exactly one literal for each clause, the sum correctly evaluates to \(\sum_{j=0}^{m-1}{\gamma^j}\).
    For the second component, we have ensured by construction of the linear combination that for each variable \(v_i\), \(\lambda^{(i)}_1 + \lambda^{(i)}_2 + 2\lambda^{(i)}_3 + \lambda^{(i)}_4 = 2\).
    Thus, the sum evaluates correctly to \(\sum_{i=0}^{n-1}{2 \cdot \gamma^i}\).
    The third component counts the usage of slack vectors (i.e.,\(\vect{x^{(i)}_4}, \vect{x^{(i)}_5}\)).
    By construction, \(\lambda^{(i)}_1 + \lambda^{(i)}_2 + \lambda^{(i)}_3 + \lambda^{(i)}_4 +\lambda^{(i)}_5 = 2 \) for each \(i \in [n]_0\), resulting in \(2n\) vector selections in total.
    Since \(m\) of those are clause-paying (i.e.,\(\vect{x^{(i)}_1}, \vect{x^{(i)}_2}, \vect{x^{(i)}_3}\)), the number of selected slack vectors must be \(2n - m\).
    It remains to show the equality for the remaining components (namely \(\hat{\vect \beta}, \hat{\vect \chi}, \hat{\vect y}\)).
    By construction, \(\hat{\vect \beta} \coloneqq \sum_{i=0}^{n-1}{2\cdot \vect \beta(i)}\) where \(\vect \beta(i)\) denotes the vector of unique variables for any fixed \(i\).
    The constructed linear combination chooses exactly two vectors for each variable \(i\), therefore \(\hat{\vect \beta}\) is hit exactly.
    The components \((\chi_1, \chi_2, \chi_3, \chi_4)\) have binary values in \(\vect{x_1^{(i)}}, \ldots, \vect{x_5^{(i)}}\) due to \Cref{thm:possible-solutions}, thus they sum up to some value \(\hat{\vect \chi} \in \{0, \ldots, 2n\}^4\).
    Let now \(\hat{\vect \chi}\) be the correct vector.
    Then \(\hat{\vect y}\) is by construction correct for any combination of \(2n\) solutions to \(\vect s^\intercal\vect x=B\), as argued in the construction of \(\mathcal{I}^{\texttt{BP}}(\hat{\vect \chi})\).

    "$\Leftarrow$": For the "only if" direction, assume that there exists a non-negative integer linear combination of the vectors \(\lambda^{(i)}_k\) that produces the target vector for some \(\hat{\vect \chi} \in \{0, \ldots, 2n\}^4\).

    The satisfying assignment \(\phi\) can now be constructed as follows for each variable \(v_i\):
    \begin{align*}
        \phi(v_i) = \begin{cases}
                        \texttt{false}, & \text{if } \lambda^{(i)}_3 = 1 \\
                        \texttt{true},  & \text{otherwise}
                    \end{cases}
    \end{align*}
    We must now show that \(\phi\) satisfies all clauses.

    We first observe that since \(\gamma > 4n\) there are no carries between the powers of \(\gamma\) in the first two components.
    Indeed, the constraint \(\sum_{i=0}^{n-1}\sum_{k=1}^{5}\lambda^{(i)}_k \leq 2n\) implies \(\lambda^{(i)}_k \leq 2n\) for each \(i \in [n]_0\) and \(k \in [5]\).
    Since the maximum coefficient in the second components is \(2 \cdot \gamma^i\), the maximum contribution to the \(i\)-th power of \(\gamma\) in each component is \(2n \cdot 2 \cdot \gamma^i = 4n \cdot \gamma^i < \gamma^{i+1}\).
    Note that this argument also holds for the first component since in any well-structured \acs{3-SAT} instance, each variable appears in at most three clauses, and thus \(m \leq 3n < 4n < \gamma\).

    The first component of the vector equation is
    \begin{align}
        \sum_{i=0}^{n-1}\left(\lambda^{(i)}_1 C^{(i)}_{pos1} + \lambda^{(i)}_2 C^{(i)}_{pos2} + \lambda^{(i)}_3 C^{(i)}_{neg\vphantom{pos2}}\right) = \sum_{j=0}^{m-1}{\gamma^j}.
        \label{eq:third-component}
    \end{align}
    Remember that the \(C_{pos1}^{(i)}, C_{pos2}^{(i)}, C_{neg\vphantom{pos2}}^{(i)}\) terms are of the form \(\gamma^j\) for some integer \(j\) and all \(\lambda^{(i)}_k\) are non-negative integers.
    Then, for each clause \(j\), there must exist exactly one \(\lambda^{(i)}_k\), \(i \in [n]_0, k \in [3]\) that is equal to 1.
    It remains to show that for each \(i \in [n]_0\): \(\lambda^{(i)}_3 = 1 \Rightarrow \lambda^{(i)}_1 = \lambda^{(i)}_2 = 0\), i.e., if \(v_i\) is set to \texttt{false}, it is only used to satisfy the clause in which it appears negatively.
    To this end, consider the second component of the vector equation
    \begin{align}
        \sum_{i=0}^{n-1}\left(\lambda^{(i)}_1 + \lambda^{(i)}_2 + 2 \cdot \lambda^{(i)}_3 + \lambda^{(i)}_4\right) \cdot \gamma^i = \sum_{i=0}^{n-1}{2 \cdot \gamma^i}.
        \label{eq:first-component}
    \end{align}
    Again, since \(\gamma > 4n\), there are no carries between the powers of \(\gamma\).
    Thus, for each \(i \in [n]_0\), we must have \(\lambda^{(i)}_1 + \lambda^{(i)}_2 + 2 \cdot \lambda^{(i)}_3 + \lambda^{(i)}_4 = 2\).
    This implies that if \(\lambda^{(i)}_3 = 1\), then \(\lambda^{(i)}_1 = \lambda^{(i)}_2 = 0\).
    Therefore, the assignment \(\phi\) is well-defined.
\end{proof}

\subsection{Putting It All Together}
%Now we are ready to prove our main result.
\mainThm*
\begin{proof}
    Let \(\mathcal{I}^{\texttt{SAT}}\) be a well-structured \acs{3-SAT} instance with \(n\) variables.
    We transform \(\mathcal{I}^{\texttt{SAT}}\) into a family of \acs{Bin Packing} instances \(\mathcal{I}^{\texttt{BP}}(\hat{\vect \chi})\) as in \Cref{lem:linear-combination}.
    The encoding length of each instance \(\mathcal{I}^{\texttt{BP}}(\hat{\vect \chi})\) is defined as \(|I| = O(d \cdot \log(B) + d \cdot \log(\|\vect a\|_\infty))\).
    By \Cref{aggilp-dim} we have: \(d = O(\log n)\) item types and bin capacity \(B \leq \gamma^{O(n^2 \log n)}\).
    Since the amounts \(\vect a\) are constructed to be the sum of \(2n\) solutions \(\vect x\) to the \ac{ILP}, we have
    \begin{align*}
        \|\vect a\|_\infty \leq 2n \cdot \|\vect x\|_\infty \leq 2n \cdot \gamma^{O(n^2)} = \gamma^{O(n^2)}.
    \end{align*}
    Thus, with \(\gamma = 4n+1 > 4n\), the encoding length of each constructed \acs{Bin Packing} instance is \(|I| = O(\log n \cdot (n^2 \log(n)^2) + \log n \cdot n^2\log n) = O(n^2 (\log n)^3)\).

    Now suppose there exists an algorithm for \acs{Bin Packing} with a runtime of \({|I|}^{2^{o(d)}}\).
    Then, we can solve \(\mathcal{I}^{\texttt{BP}}(\hat{\vect \chi})\) for all \(\hat{\vect \chi} \in \{0, \ldots, 2n\}^4\) in time \((2n+1)^4 \cdot {|I|}^{2^{o(d)}} = (2n+1)^4 \cdot (n^2 (\log n)^3)^{2^{o(d)}} = (2n+1)^4 \cdot (n^2 (\log n)^3)^{n^{o(1)}}\).
    Using a similar argument as in~\cite{DBLP:conf/sosa/KowalikLMPS24}, we get:
    \begin{align*}
        (2n+1)^4 \cdot (n^2 (\log n)^3)^{n^{o(1)}} \leq 2^{n^{o(1)}3\log(n) + 4\log(2n+1)} \leq 2^{o(n)}
    \end{align*}
    contradicting the \ac{ETH}.
    With this, we have proven \Cref{thm:main-result}.
\end{proof}

\section{Conclusion}
Our result still leaves unresolved the central open problem of whether there exists an \textsc{FPT} algorithm for \acs{Bin Packing} parameterized by $d$.
Goemans and Rothvoss~\cite{DBLP:journals/jacm/GoemansR20} as well as Mnich and van Bevern~\cite{DBLP:journals/cor/MnichB18} posed it as an open problem whether bin packing with $d$ item sizes can be solved in time $ f(d) \cdot O(\log \Delta)^{O(1)}$, where $f$ is an arbitrary function.
Interestingly, an \textsc{FPT} algorithm using at most $OPT +1$ bins is known by Jansen and Solis-Oba~\cite{DBLP:journals/mor/JansenS11}.

Our techniques might be of interest when proving double exponential lower bounds for other high-multiplicity scheduling problems~\cite{DBLP:journals/jacm/GoemansR20} or high-multiplicity n-folds~\cite{DBLP:journals/jacm/GoemansR20,DBLP:journals/orl/KnopKLMO21,DBLP:journals/mp/KnopKLMO23}.
Forall-exist statements are also tightly connected to this topic.
In such problems, we are given a convex set \(\mathcal{Q} \subseteq \mathbb{R}^m\) and an integer matrix \(W \in \mathbb{Z}^{m \times n}\).
A major open problem is whether any algorithm solving \(\forall b\in \mathcal{Q} \cap \mathbb{Z}^m\ \exists x \in \mathbb{Z}^n\) such that \(Wx \leq b\) must necessarily have double-exponential running time in \(n\)~\cite{DBLP:conf/soda/BachERW25}.

%%
%% Bibliography
%%

%% Please use bibtex, 

\bibliography{src}

\appendix

\label{sec:appendix}
\section{Omitted Proofs}
\label{sec:proofs}

\subsection{Proof of \Cref{def:structured-3sat}}
\label{sec:proof-structuredSat}
\structuredSat*
\begin{proof}
    As shown in~\cite{DBLP:journals/dam/Tovey84} any \acs{3-SAT} instance can be transformed into an instance \(I\) where each variable occurs in at most \(3\) clauses with the addition of at most \(3m\) variables.
    We transform \(I\) into an equivalent instance \(I'\) as follows:

    For any variable \(v_i\), appearing \(k_i = 2\) times, replace its \(j\)-th occurrence by a new variable \(v'_j\).
    To enforce \(v'_1 = v'_2\), add the two clauses:
    \[(v'_1 \lor \lnot v'_2) \land (v'_2 \lor \lnot v'_1). \]
    Now, eliminate any variable \(v_i\) that appears only positively or negatively by satisfying its clauses.
    Finally, each remaining variable appears exactly three times.
    If a variable appears twice negatively and once positively, we replace each occurrence with its literal negation, ensuring it appears exactly twice positively and once negatively.

    In total, this reduction introduces at most \(O(3m)\) variables.
    Therefore, the number of variables in \(I'\) is at most \(n' = O(n)\).
    Since each variable in \(I'\) appears exactly \(3\) times, \(m' = 3n' = O(n)\).
\end{proof}

\subsection[Proof of Linearization]{Proof of \Cref{lem:linearization}}
\label{sec:proof-lemLinearization}
\constraintLinearization*
\begin{proof}
    We aim to prove that \(y = \sum_{j=1}^{k}x_j \cdot \chi_j\) is equivalent to the \(O(k)\) constraints
    \begin{align}
        y - x_j & \leq U \cdot (1 - \chi_j)        & \forall j \in [k] \label{eq:linearization1} \tag{1} \\
        y - x_j & \geq -U \cdot (1 - \chi_j)       & \forall j \in [k] \label{eq:linearization2} \tag{2} \\
        y       & \geq 0                           & \label{eq:linearization3} \tag{3}                   \\
        y       & \leq U \cdot \sum_{j=1}^k \chi_j & \label{eq:linearization4}\tag{4}
    \end{align}
    under the condition \(\sum_{j=1}^{k}\chi_j \leq 1\).

    A key part of this analysis is that due to \(\sum_{j=1}^{k}\chi_j \leq 1\), there are exactly two cases: Either all \(\chi_j = 0\) or exactly one \(\chi_m = 1\) for some \(m \in [k]\) and \(\chi_j = 0\) for all \(j \neq m\).
    We will use these two cases in both directions of the equivalence.

    "$\Rightarrow$": Assume \((\vect y,\vect x,\vect \chi)\) satisfy the nonlinear equation \(y = \sum_{j=1}^{k}x_j \cdot \chi_j\).
    We must now show that they also satisfy the inequalities \eqref{eq:linearization1}-\eqref{eq:linearization4}.

    \noindent\textit{Case 1}: (\(\sum_{j=1}^{k}\chi_j = 0\)).
    This implies \(\chi_j = 0\) for all \(j\), thus \(y = \sum_{j=1}^{k}x_j \cdot \chi_j = 0\).
    In this case the inequalities \eqref{eq:linearization1}-\eqref{eq:linearization4} hold.

    \noindent\textit{Case 2}: (\(\sum_{j=1}^{k}\chi_j = 1\)).
    This implies that exactly one \(\chi_m = 1\) for some \(m\) and \(\chi_j = 0\) for all \(j \neq m\).
    With this, \(y = \sum_{j=1}^{k}x_j \cdot \chi_j = x_m\).
    For \(j \neq m\) the inequalities hold by the same argument as above.
    For \(j = m\), we take a closer look at \eqref{eq:linearization1}:
    \begin{align*}
        y - x_m & \leq U \cdot (1 - \chi_m) \Rightarrow x_m - x_m \leq U \cdot (1 - 1)  \Rightarrow 0 \leq 0
    \end{align*}
    An analogue argument works for \eqref{eq:linearization2}.
    \eqref{eq:linearization3}-\eqref{eq:linearization4} hold due to the given bounds on \(x_m\).

    "$\Leftarrow$": Assume \((\vect y,\vect x,\vect \chi)\) satisfy inequalities \eqref{eq:linearization1}-\eqref{eq:linearization4}.
    We must now show they also satisfy the nonlinear equation \(y = \sum_{j=1}^{k}x_j \cdot \chi_j\).

    \noindent\textit{Case 1}: (\(\sum_{j=1}^{k}\chi_j = 0\)).
    Here the nonlinear equation implies \(y = 0\).
    We must now show that the inequalities \eqref{eq:linearization3}-\eqref{eq:linearization4} also force \(y = 0\).
    With inequality \eqref{eq:linearization4}, we get: \(y \leq U \cdot \sum_{j=1}^k \chi_j = 0\).
    Together with inequality \eqref{eq:linearization3}, we get \(y=0\).

    \noindent\textit{Case 2}: (\(\sum_{j=1}^{k}\chi_j = 1\)).
    his implies that exactly one \(\chi_m = 1\) for some \(m\) and \(\chi_j = 0\) for all \(j \neq m\).
    With this, \(y = \sum_{j=1}^{k}x_j \cdot \chi_j = x_m\).
    Let's look at the inequalities \eqref{eq:linearization1} and \eqref{eq:linearization2} specially for \(j = m\):
    \begin{align*}
        y - x_m \leq U(q - \chi_m) \Rightarrow y - x_m \leq U(1 - 1) \Rightarrow y - x_m \leq 0 \Rightarrow y \leq x_m \\
        y - x_m \geq -U(q - \chi_m) \Rightarrow y - x_m \geq -U(1 - 1) \Rightarrow y - x_m \geq 0 \Rightarrow y \geq x_m
    \end{align*}
    This implies \(y = x_m\).
    Note that all other inequalities are satisfied as argued in the first part in the proof but are not needed to force \(y = x_m\) in this case.

    We do note that constraint \eqref{eq:linearization4} may be omitted if \(\sum_{j=1}^{k}\chi_j = 1\).
\end{proof}

\subsection{Proof of \Crefrange{claim:blocks}{claim:abcd}}
\label{sec:proof-thm-claims}
\claimBlocks*
\begin{proof}
    We show by induction over $j \in [\log(n)+1]_0$, that after $j$ search steps, the variable $z_j$ equals the integer represented by a contiguous subsequence of $\frac{n}{2^j}$ blocks and that this subsequence contains the $i$-th block.

    \emph{Base Case:} Assume $j = 0$. Then, by \Cref{eq:s_logn1} we have, $z_0 = Z$. As $Z$ contains all $\frac{n}{2^j} = n$ blocks, this also holds for the $i$-th block.
    The Euclidean division of \Cref{eq:s_logn2}, in combination with the bound of remainder $r_0$, ensures that both, the quotient $q_0 = \left\lfloor \frac{Z}{(\satbase^m)^{3n/2}}\right\rfloor$ and the remainder $r_0$ are unique. Note, that $Z$ is now split into two integers $q_0$ and $r_0$ that represent two equal-sized subsequences.

    \emph{Inductive Step:} Let $j \in [\log(n)]_0$ and assume that $z_j$ equals the integer represented by a contiguous subsequence of $\frac{n}{2^j}$ blocks and that this subsequence contains the $i$-th block.
    For the same reason as stated in the base case, the Euclidean division gives unique values for $q_j$ and $r_j$.
    \Cref{eq:s_logn4} now simulates a case distinction. Take the $(j+1)$-th bit $x_{j+1}^{\texttt{bin}}$ of the binary representation of $i$. If $x_{j+1}^{\texttt{bin}} = 0$ we keep the lower half, i.e., we set $z_{j+1}=r_j$. If $x_{j+1}^{\texttt{bin}} = 1$ we keep the upper half, i.e., we set $z_{j+1}=q_j$.
    In either case $z_{j+1}$ equals the concatenation of the blocks of the chosen half which contains exactly $\frac{n}{2^{j+1}}$ blocks as $n$ is a power of 2. Also, since $\vect x^{\texttt{bin}}$ is the binary representation of $i$, the $i$-th block remains in the selected half.

    Finally, after $\log(n)$ steps, the subsequence consists of $\frac{n}{2^{\log(n)}} = 1$ block which is the $i$-th block, i.e., \(z_{\log(n)} = C^{(i)}_{pos1} \cdot (\satbase^m)^0 + C^{(i)}_{pos2} \cdot (\satbase^m)^1 + C^{(i)}_{neg\vphantom{p2}} \cdot (\satbase^m)^2\).
\end{proof}

\claimCs*
\begin{proof}
    With \Cref{claim:blocks}, we have \(z_{\log(n)} = C^{(i)}_{pos1} \cdot (\satbase^m)^0 + C^{(i)}_{pos2} \cdot (\satbase^m)^1 + C^{(i)}_{neg\vphantom{p2}} \cdot (\satbase^m)^2\).
    \Cref{eq:cpscn1} together with $c_{pos1} < \satbase^m$ ensures $c_{pos1} = C^{(i)}_{pos1}$ as the Euclidean division separates the quotient $q_c = \left\lfloor \frac{z_{\log(n)}}{\satbase^m}\right\rfloor = C^{(i)}_{pos2} \cdot (\satbase^m)^0 + C^{(i)}_{neg\vphantom{p2}} \cdot (\satbase^m)^1$ and the remainder $c_{pos1} = z_{\log(n)} \mod \satbase^m$. Note that both values are unique for fixed $i$ and $Z$.

    \Cref{eq:cpscn2} together with $c_{pos2} < \satbase^m$ simulates another Euclidean division that now extracts the correct values for $c_{pos2}$ and $c_{neg}$. More concretely, we obtain the unique values $c_{neg} = \left\lfloor\frac{q_c}{\satbase^m}\right\rfloor = C^{(i)}_{neg\vphantom{p2}}$ and $c_{pos2} = q_c \mod \satbase^m = C^{(i)}_{pos2}$.
    %By definition of $M$ and since $c_{n}=C^{(i)}_{n}$, the bound $c_{n} <M$ is fulfilled.
\end{proof}

\claimAbcd*
\begin{proof}
    Note that by \cite{DBLP:conf/esa/JansenPT25}, we have $\tilde{r}_{\log(n)} = \gamma^i$ for given $i \in [n]_0$.
    We now make a case distinction over the value of \(\alpha_3\). Since it is binary, we consider the following two cases.

    \emph{Case 1:} Assume $\alpha_3 = 1$. Then the left-hand side of \Cref{eq:alpha3} equals 0. This implies $\chi_1+\chi_2+\chi_3 = 0$ and therefore $\chi_1=\chi_2=\chi_3=0$.
    With this, and \Cref{eq:alpha1}, we get:
    \begin{align*}
        \alpha_1 = c_{pos1}\cdot 0 + c_{pos2}\cdot 0 + c_{neg}\cdot 0 = 0
    \end{align*}
    With \Cref{eq:alpha2}, we get
    \begin{align*}
        \alpha_2 = \tilde{r}_{\log(n)} \cdot \chi_4
    \end{align*}
    Now, the inequality \(\chi_4 \leq \alpha_3\) allows \(\chi_4 \in \{0, 1\}\), thus:
    \begin{align*}
        \begin{pmatrix}
            \alpha_1 \\ \alpha_2 \\ \alpha_3
        \end{pmatrix}
        = \begin{cases}
              (0, \tilde{r}_{\log(n)}, 1)^\intercal = (0, \gamma^i, 1)^\intercal, & \text{if } \chi_4=1 \\
              (0, 0, 1)^\intercal = (0, 0, 1)^\intercal,                          & \text{if } \chi_4=0
          \end{cases}
    \end{align*}

    \emph{Case 2:} Assume $\alpha_3 = 0$.
    Now, the left-hand side of \Cref{eq:alpha3} equals 1. Therefore, exactly one of $\chi_1,\chi_2$, and $\chi_3$ equals 1 and the other two equal 0. The inequality \(\chi_4 \leq \alpha_3\) implies \(\chi_4 = 0\).
    With \Cref{eq:alpha1,eq:alpha2}, we get the following possibilities for \(\alpha_1\) and \(\alpha_2\):
    \begin{align*}
        \begin{pmatrix}
            \alpha_1 \vphantom{C^{(i)}_{pos1}} \\ \alpha_2 \vphantom{C^{(i)}_{pos1}}\\ \alpha_3 \vphantom{C^{(i)}_{pos1}}
        \end{pmatrix}
        = \begin{cases}
              (C^{(i)}_{pos1}, \tilde{r}_{\log(n)}, 0)^\intercal = (C^{(i)}_{pos1}, \gamma^i, 0)^\intercal,                                        & \text{if } \chi_1=1 \text{ and } \chi_2=\chi_3=0 \\
              (C^{(i)}_{pos2}, \tilde{r}_{\log(n)}, 0)^\intercal= (C^{(i)}_{pos2}, \gamma^i, 0)^\intercal,                                         & \text{if } \chi_2=1 \text{ and } \chi_1=\chi_3=0 \\
              (C^{(i)}_{neg\vphantom{p2}}, 2 \cdot \tilde{r}_{\log(n)}, 0)^\intercal= (C^{(i)}_{neg\vphantom{p2}}, 2 \cdot \gamma^i, 0)^\intercal, & \text{if } \chi_3=1 \text{ and } \chi_1=\chi_2=0
          \end{cases}
    \end{align*}

    As this case distinction is exhaustive, there are no other possibilities.
    This completes the proof.
\end{proof}

\subsection[Aggregation of the ILP]{Aggregation of the \ac{ILP}}
\label{sec:aggregation}

Using \Cref{lem:aggregation} the \ac{ILP} \(A\vect x=\vect b\) with \(k = O(\log(n))\) constraints and \(d = O(\log(n))\) variables, defined by \Crefrange{eq:lis-constraint-linear}{eq:alpha4-eq} can be aggregated into a single constraint of the form:
\begin{equation}\label{eq:aggregated-ilp}
    \begin{array}{l}
        \sum_{i=1}^k \big(\lisM^{i-1} \sum_{j=1}^d (a_{ij} x_j)\big)
        + \sum_{j=1}^d \big(\lisM^{k+j-1} (x_j + y_j)\big)
        + \lisM^{k+d}\big(\sum_{j=1}^d (x_j+y_j)+y_{d+1}\big) \\
        = \sum_{i=1}^k (\lisM^{i-1} b_i)
        + \sum_{j=1}^d \big(\lisM^{k+j-1} u_j\big)
        + \lisM^{k+d} U,
    \end{array}
\end{equation}
where $U \coloneqq \sum_{j=1}^d u_j$ and $\lisM \coloneqq \Delta U + \max(\|\vect b\|_\infty, \|\vect u\|_\infty) + \Delta + 2$.
We now want to show that the resulting single constraint of the form \(\vect s^\intercal\vect x = B\) satisfies the following lemma:
\ilpAggrDim*
\begin{proof}
    First, note that the original \ac{ILP} has $O(\log(n))$ variables. For the aggregation (See \Cref{lem:aggregation}), we add $O(\log(n))$ slack variables to the slack variable vector $\vect y$. Thus, \(\vect s, \vect x \in \ZZ_{\geq 0}^{O(\log(n))}\).
    Next, consider the size of \(U\) and \(M\).
    All variables in the original \ac{ILP} (see \Cref{sec:equalityConst}) are upper bounded by \(2\gamma^{3nm} = \gamma^{O(nm)} = \gamma^{O(n^2)}\).
    Since \(d = O(\log(n))\), this directly implies \(U \leq O(\log(n)) \cdot \gamma^{O(n^2)} = \gamma^{O(n^2)}\).
    Furthermore, the maximum coefficient \(\Delta\) as well as the right-hand side \(\|\vect b\|_\infty\) are bounded by \(\gamma^{3mn}\) (see \Cref{eq:C5,alpha1-eq} or \eqref{eq:C10} in \Cref{sec:equalityConst}).
    With these bounds for \(\Delta, U\), and \(\|\vect b\|_\infty\), we get \(M = \gamma^{O(n^2)}\).

    Observe that the vector \(\vect s\) lists the coefficients of variables \( x_j, y_j\) in \Cref{eq:aggregated-ilp}.
    The coefficient for any given variable \(x_j\) is at most
    \begin{align*}
        M^{k+d} + M^{k+j-1} + \sum_{i=1}^k a_{ij} M^{i-1} \leq 2\cdot M^{k+d} + \sum_{i=1}^k \Delta M^{i-1} \leq 2 \cdot M^{k+d} + \Delta \cdot M^k \leq O(M^{k+d})
    \end{align*}
    Thus, the maximum coefficient is of magnitude \(\|\vect s\|_{\infty} \leq O(M^{k+d}) = \gamma^{O(n^2 \log(n))}\).
    The same argument can be made for the right-hand side of \Cref{eq:aggregated-ilp}, yielding \(B \leq \gamma^{O(n^2 \log(n))}\).

    The maximum value a variable in the aggregated \ac{ILP} may take is $\gamma^{O(n^2)}$, as each variable in the original \ac{ILP} is bounded by at most \(\gamma^{O(n^2)}\) and the introduced slacks cannot be larger than $U = \gamma^{O(n^2)}$ (see \Cref{lem:aggregation} for details).
    Thus, we have $\|\vect x\|_\infty \leq \gamma^{O(n^2)}$ for any feasible solution \(\vect x\).
\end{proof}

\section{Omitted Constraints}
\label{sec:linearConst}
In this section, we give the linearized constraints referred to in \Cref{thm:possible-solutions}. The constraints can be obtained through a direct application of \Cref{lem:linearization} to \Cref{eq:s_logn4} and \Cref{eq:alpha1,eq:alpha2}.

\subsection[Linearization of Constraints]{Linearization of Constraints (\nlinsymb \(\rightarrow\) \linsymb)}
\label{sec:linConst-decoding}
\subsubsection{Decoding Constraints (\ref{eq:s_logn4})}
The linearization follows by applying \Cref{lem:linearization}.
As an upper bound for each variable, we set $U^{\texttt{dc}}:=\satbase^{3nm}$, since the maximum size of any variable is bounded by \(Z\), which in turn is bounded by \(\satbase^{3nm}\) (see \Cref{lem:sat-encoding} for details):
%Note that for some constraints we could tighten the estimation, however having it set like this is sufficient for our purpose and improves readability.
\begin{align*}
    \begin{array}{rll}\tag{\linsymb C5}\label{eq:decoding-linear}
        z_{j+1} - q_j & \leq U^{\texttt{dc}} \cdot (1 - x^{\texttt{bin}}_{j+1})  & \forall j \in [\log(n)]_0 \\
        z_{j+1} - q_j & \geq -U^{\texttt{dc}} \cdot (1 - x^{\texttt{bin}}_{j+1}) & \forall j \in [\log(n)]_0 \\
        z_{j+1} - r_j & \leq U^{\texttt{dc}} \cdot x^{\texttt{bin}}_{j+1}        & \forall j \in [\log(n)]_0 \\
        z_{j+1} - r_j & \geq -U^{\texttt{dc}} \cdot x^{\texttt{bin}}_{j+1}       & \forall j \in [\log(n)]_0
    \end{array}
\end{align*}
Note that we may omit constraint \eqref{eq:linearization4} in the linearization since \((1 - x^{\texttt{bin}}_{j+1}) + x^{\texttt{bin}}_{j+1} = 1\).

\subsubsection{Clause Constraints (\ref{eq:alpha1}), (\ref{eq:alpha2})}
The linearization follows by applying \Cref{lem:linearization}, where \(U^\texttt{cc}\) is an upper bound for the involved variables.
By construction, \(\alpha_1, c_{p1}, c_{p2}, c_{n} \leq \gamma^m\) and \(\alpha_2 \leq 2 \cdot \gamma^n\).
Therefore, we set $U^{\texttt{cc}} \coloneqq \max(2 \cdot\gamma^n,\gamma^m-1)$:

\begin{align*}
    \begin{array}{rl}\tag{\linsymb C9}\label{eq:alpha1-linear}
        \alpha_1 - c_{pos1} & \leq U^{\texttt{cc}} \cdot (1 - \chi_1)                        \\
        \alpha_1 - c_{pos1} & \geq -U^{\texttt{cc}} \cdot (1 - \chi_1)                       \\
        \alpha_1 - c_{pos2} & \leq U^{\texttt{cc}} \cdot (1 - \chi_2)                        \\
        \alpha_1 - c_{pos2} & \geq -U^{\texttt{cc}} \cdot (1 - \chi_2)                       \\
        \alpha_1 - c_{neg}  & \leq U^{\texttt{cc}} \cdot (1 - \chi_3)                        \\
        \alpha_1 - c_{neg}  & \geq -U^{\texttt{cc}} \cdot (1 - \chi_3)                       \\
        \alpha_1            & \leq U^{\texttt{cc}}\cdot\left(\chi_1 + \chi_2 + \chi_3\right)
    \end{array}
\end{align*}
\begin{align*}
    \begin{array}{rl}\tag{\linsymb C10}
        \alpha_2 - \tilde{r}_{\log(n)}         & \leq U^{\texttt{cc}} \cdot (1 - \chi_1)                                   \\
        \alpha_2 - \tilde{r}_{\log(n)}         & \geq -U^{\texttt{cc}} \cdot (1 - \chi_1)                                  \\
        \alpha_2 - \tilde{r}_{\log(n)}         & \leq U^{\texttt{cc}} \cdot (1 - \chi_2)                                   \\
        \alpha_2 - \tilde{r}_{\log(n)}         & \geq -U^{\texttt{cc}} \cdot (1 - \chi_2)                                  \\
        \alpha_2 - 2 \cdot \tilde{r}_{\log(n)} & \leq U^{\texttt{cc}} \cdot (1 - \chi_3)                                   \\
        \alpha_2 - 2 \cdot \tilde{r}_{\log(n)} & \geq -U^{\texttt{cc}} \cdot (1 - \chi_3)                                  \\
        \alpha_2 - \tilde{r}_{\log(n)}         & \leq U^{\texttt{cc}} \cdot (1 - \chi_4)                                   \\
        \alpha_2 - \tilde{r}_{\log(n)}         & \geq -U^{\texttt{cc}} \cdot (1 - \chi_4)                                  \\
        \alpha_2                               & \leq U^{\texttt{cc}} \cdot\left(\chi_1 + \chi_2 + \chi_3 + \chi_4 \right)
    \end{array}
\end{align*}

\subsection{Construction of Equality Constraints}
\label{sec:equalityConst}
In this section, we introduce slack variables to turn the linear constraints of the prior section into equations.
We start by stating the linear equality constraints equivalent to \Cref{eq:lis-constraint}.

\subsubsection{Variable Constraints}
See~\cite{DBLP:conf/esa/JansenPT25} for details and proofs.
Our notation differs slightly: we renamed some variables and, unlike~\cite{DBLP:conf/esa/JansenPT25}, \(x^{\texttt{bin}}_1\) is the most significant bit in \(\vect x^{\texttt{bin}}\).
\begin{align}
    \label{eq:lis-constraint-linear}
    \begin{array}{rll}\tag{C1}
        \tilde{r}_0                                                                                    & = 1
        \\
        (\gamma^{2^{j+1}}+\gamma^{2^j})\tilde{y}_j - (\gamma^{2^j}+1)\tilde{r}_{j+1} + \tilde{s}_{1,j} & = \gamma^{2^j}   & \forall j \in [\log(n)]_0
        \\
        - \gamma^{2^j}\tilde{y}_j  + \tilde{r}_{j+1} + \tilde{s}_{2,j}                                 & = \gamma^{2^j}-1 & \forall j \in [\log(n)]_0
        \\
        x^{\texttt{bin}}_{\log(n) - j} + \tilde{s}_{3,j}                                               & = 1              & \forall j \in [\log(n)]_0
        \\
        x^{\texttt{bin}}_{\log(n) - j} - \tilde{y}_j + \tilde{s}_{4,j}                                 & = 0              & \forall j \in [\log(n)]_0
        \\
        \tilde{y}_j - (\gamma^{2^j}+1) x^{\texttt{bin}}_{\log(n) - j} + \tilde{s}_{5,j}                & = 0              & \forall j \in [\log(n)]_0
        \\
        (\gamma^{2^j} -1) \tilde{z}_j + \tilde{r}_j - \tilde{r}_{j+1}                                  & = 0              & \forall j \in [\log(n)]_0
        \\
        \gamma^{2^j} x^{\texttt{bin}}_{\log(n) - j} - \tilde{z}_j + \tilde{r}_j + \tilde{s}_{7,j}      & = \gamma^{2^j}   & \forall j \in [\log(n)]_0
        \\
        -\gamma^{2^j} x^{\texttt{bin}}_{\log(n) - j} + \tilde{z}_j + \tilde{s}_{8,j}                   & = 0              & \forall j \in [\log(n)]_0
        \\
        \tilde{z}_j - \tilde{r}_j +\tilde{s}_{9,j}                                                     & = 0              & \forall j \in [\log(n)]_0
        \\
        \tilde{r}_{\log(n)} - s_{9 \log(n)}                                                            & = 1              &
        \\
        \tilde{r}_{\log(n)} + s_{9 \log(n)+1}                                                          & = \gamma^{n-1}   &
    \end{array}
\end{align}
The lower bound of each variable is 0. The largest upper bound of the variables is $\gamma^{O(n)}$. Also, the absolute values of the coefficients and the right hand sides of these constraints are upper bounded by $\gamma^{O(n)}$.
%For a formal proof, we refer to~\cite{DBLP:conf/esa/JansenPT25}.

\subsubsection{Decoding Constraints}
The constraints in \Cref{eq:s_logn3,eq:decoding-linear,eq:cpscn3} can be transformed into equality constraints through the introduction of slack variables \(y^{\texttt{dc}}_{\ell, j}\) for all \(\ell \in [8], j \in [\log(n)]_0\).
\begin{align}
    z_0                                             & = Z
    \tag{C2}                                                                                                     \\
    z_j - q_j \cdot (\satbase^m)^{3n/2^{j+1}} - r_j & = 0                            & \forall j \in [\log(n)]_0
    \tag{C3}                                                                                                     \\
    r_j + y^{\texttt{dc}}_{1,j}                     & = (\satbase^m)^{3n/2^{j+1}} -1 & \forall j \in [\log(n)]_0
    \tag{C4}
\end{align}
\begin{align*}
    \begin{array}{rll}\tag{C5} \label{eq:C5}
        z_{j+1} - q_j + U^{\texttt{dc}} \cdot x^{\texttt{bin}}_{j+1} + y^{\texttt{dc}}_{2,j}   & = U^{\texttt{dc}} & \forall j \in [\log(n)]_0
        \\
        - z_{j+1} + q_j + U^{\texttt{dc}} \cdot x^{\texttt{bin}}_{j+1} + y^{\texttt{dc}}_{3,j} & = U^{\texttt{dc}} & \forall j \in [\log(n)]_0
        \\
        z_{j+1} - r_j - U^{\texttt{dc}} \cdot x^{\texttt{bin}}_{j+1} + y^{\texttt{dc}}_{4,j}   & = 0               & \forall j \in [\log(n)]_0
        \\
        z_{j+1} - r_j +U^{\texttt{dc}} \cdot x^{\texttt{bin}}_{j+1} - y^{\texttt{dc}}_{5,j}    & = 0               & \forall j \in [\log(n)]_0
    \end{array}
\end{align*}
\begin{align}
    z_{\log(n)} & = q_c \cdot \satbase^m + c_{pos1}
    \tag{C6}                                            \\
    q_c         & = c_{neg} \cdot \satbase^m + c_{pos2}
    \tag{C7}
\end{align}
\begin{align*}
    \begin{array}{rl}\tag{C8}
        c_{pos1} + y^{\texttt{dc}}_{6} & = \satbase^m-1
        \\
        c_{pos2} + y^{\texttt{dc}}_{7} & = \satbase^m-1
        \\
        c_{neg} + y^{\texttt{dc}}_{8}  & = \satbase^m-1
    \end{array}
\end{align*}
Again, each variable is lower bounded by 0. The upper bounds can be set as follows:
\begin{align*}
    z_j                         & \leq (\satbase^m)^{\frac{3n}{2^j}} & \forall j \in [\log(n)]_0                                 \\
    q_j,r_j                     & \leq (\satbase^m)^{3n/2^{j+1}} -1  & \forall j \in [\log(n)]_0                                 \\
    q_c                         & \leq (\satbase^m)^2                                                                            \\
    c_{pos1}, c_{pos2}, c_{neg} & \leq (\satbase^m)-1                                                                            \\
    y^{\texttt{dc}}_{1,j}       & \leq (\satbase^m)^{3n/2^{j+1}} -1  & \forall j \in [\log(n)]_0                                 \\
    y^{\texttt{dc}}_{\ell,j}    & \leq 2(\satbase^m)^{3n}            & \forall \ell \in \{2,\dots,5\}, \forall j \in [\log(n)]_0 \\
    y^{\texttt{dc}}_{\ell}      & \leq (\satbase^m)-1                & \forall \ell \in \{6,7,8\}
\end{align*}

Both, the largest absolute value of the coefficients and the right-hand side is $U^{\texttt{dc}}=\satbase^{3nm}$ (\Cref{eq:C5}).

\subsubsection{Clause Constraints}
Finally, the constraints in \Crefrange{eq:alpha1-linear}{eq:alpha4-linear} can be transformed into equality constrains through the introduction of slack variables \(y^{\texttt{cc}}_\ell\) for each \(\ell \in [15]\).
\begin{align*}
    \begin{array}{rl}\tag{C9}\label{alpha1-eq}
        \alpha_1 - c_{pos1} +U^{\texttt{cc}} \cdot \chi_1 +y^{\texttt{cc}}_1  & = U^{\texttt{cc}}
        \\
        -\alpha_1 + c_{pos1} +U^{\texttt{cc}} \cdot \chi_1 +y^{\texttt{cc}}_2 & = U^{\texttt{cc}}
        \\
        \alpha_1 - c_{pos2} +U^{\texttt{cc}}\cdot \chi_2  +y^{\texttt{cc}}_3  & = U^{\texttt{cc}}
        \\
        -\alpha_1 + c_{pos2} +U^{\texttt{cc}}\cdot \chi_2 +y^{\texttt{cc}}_4  & = U^{\texttt{cc}}
        \\\
        \alpha_1 - c_{neg} +U^{\texttt{cc}}\cdot \chi_3 +y^{\texttt{cc}}_5    & = U^{\texttt{cc}}
        \\
        -\alpha_1 + c_{neg} +U^{\texttt{cc}}\cdot \chi_3 +y^{\texttt{cc}}_6   & = U^{\texttt{cc}}
    \end{array}
\end{align*}
\begin{align*}
    \begin{array}{rl}\tag{C10}\label{eq:C10}
        \alpha_2 - \tilde{r}_{\log(n)} +U^{\texttt{cc}}\cdot\chi_1 +y^{\texttt{cc}}_7             & = U^{\texttt{cc}}
        \\
        -\alpha_2 + \tilde{r}_{\log(n)} +U^{\texttt{cc}}\cdot\chi_1 +y^{\texttt{cc}}_8            & = U^{\texttt{cc}}
        \\
        \alpha_2 - \tilde{r}_{\log(n)} +U^{\texttt{cc}}\cdot\chi_2 +y^{\texttt{cc}}_9             & = U^{\texttt{cc}}
        \\
        -\alpha_2 + \tilde{r}_{\log(n)} +U^{\texttt{cc}}\cdot\chi_2 +y^{\texttt{cc}}_{10}         & = U^{\texttt{cc}}
        \\
        \alpha_2 - 2 \cdot \tilde{r}_{\log(n)} +U^{\texttt{cc}}\cdot\chi_3 +y^{\texttt{cc}}_{11}  & = U^{\texttt{cc}}
        \\
        -\alpha_2 + 2 \cdot \tilde{r}_{\log(n)} +U^{\texttt{cc}}\cdot\chi_3 +y^{\texttt{cc}}_{12} & = U^{\texttt{cc}}
        \\
        \alpha_2 - \tilde{r}_{\log(n)} +U^{\texttt{cc}}\cdot\chi_4 +y^{\texttt{cc}}_{13}          & = U^{\texttt{cc}}
        \\
        -\alpha_2 + \tilde{r}_{\log(n)} +U^{\texttt{cc}}\cdot\chi_4 +y^{\texttt{cc}}_{14}         & = U^{\texttt{cc}}
    \end{array}
\end{align*}
\begin{align}
    \chi_1 + \chi_2 + \chi_3 + \alpha_3    & =  1
    \tag{C11}                                     \\
    \chi_4 -\alpha_3 +y^{\texttt{cc}}_{15} & = 0
    \tag{C12}\label{eq:alpha4-eq}
\end{align}

As usual, the lower bound of the variables is 0. We can set the upper bounds to:
\begin{align*}
    \alpha_1             & \leq \gamma^m                                             \\
    \alpha_2             & \leq 2\gamma^n                                            \\
    \alpha_3             & \leq 1                                                    \\
    \chi_\ell            & \leq 1                            & \forall \ell \in [4]  \\
    y^{\texttt{cc}}_\ell & \leq 2U^{\texttt{cc}} = 4\gamma^n & \forall \ell \in [14] \\
    y^{\texttt{cc}}_{15} & \leq 1
\end{align*}

Here, the largest absolute value of the coefficients and the right-hand side is $U^{\texttt{cc}} = \max(2 \cdot\gamma^n,\gamma^m-1)$ (\Cref{alpha1-eq,eq:C10}).

\end{document}